\documentclass[a4paper,twocolumn,notitlepage,nofootinbib,longbibliography,superscriptaddress,floatfix]{revtex4-2}
\usepackage{
    adjustbox,
    algorithm,
    algpseudocode,
    amsmath,
    amssymb,
    amsthm,
    booktabs,
    braket,
    comment,
    csquotes,
    enumitem,
    graphicx,
    IEEEtrantools,
    mathtools,
    multirow,
    natbib,
    physics,
    refcount,
    relsize,
    url,
    times,
    xcolor
}

\usepackage[english]{babel}
\usepackage[normalem]{ulem}
\usepackage[caption=false]{subfig}

\setlength{\skip\footins}{0.75cm}
\interfootnotelinepenalty=10000

\definecolor{mainblue}{HTML}{1f77b4}
\definecolor{mainorange}{HTML}{ff7f0e}
\definecolor{maingreen}{HTML}{2ca02c}
\definecolor{mainred}{HTML}{DC3522}
\definecolor{mainpurple}{HTML}{9467bd}
\definecolor{mainpink}{HTML}{e377c2}

\usepackage[colorlinks=true, urlcolor=mainorange, linkcolor=maingreen, citecolor=mainblue]{hyperref}



\DeclareMathOperator{\Herm}{Herm}

\renewcommand{\exp}{\ensuremath{\mathrm{exp}}}






\providecommand{\to}{\ensuremath{\Tilde{o}}}

\providecommand{\tt}{\ensuremath{\Tilde{t}}}



\providecommand{\calD}{\ensuremath{\mathcal{D}}}

\providecommand{\calG}{\ensuremath{\mathcal{G}}}
\providecommand{\calH}{\ensuremath{\mathcal{H}}}

\providecommand{\calM}{\ensuremath{\mathcal{M}}}

\providecommand{\calO}{\ensuremath{\mathcal{O}}}


\DeclareMathOperator*{\bbE}{\ensuremath{\mathbb{E}}}

\providecommand{\bbI}{\ensuremath{\mathbb{I}}}

\providecommand{\bbR}{\ensuremath{\mathbb{R}}}

\providecommand{\bbU}{\ensuremath{\mathbb{U}}}


\DeclareMathOperator*{\argmin}{arg\,min}

\newtheorem{theorem}{Theorem}
\newtheorem{lemma}[theorem]{Lemma}

\newtheorem{corollary}[theorem]{Corollary}

\providecommand{\customgenericname}{}
\newcommand{\newcustomtheorem}[2]{%
  \newenvironment{#1}[1]
  {%
   \renewcommand\customgenericname{#2}%
   \renewcommand\theinnercustomgeneric{##1}%
   \innercustomgeneric
  }
  {\endinnercustomgeneric}
}

\newcustomtheorem{customtheorem}{Theorem}
\newcustomtheorem{customlemma}{Lemma}
\newcustomtheorem{customproposition}{Proposition}

\newcommand{\fu}{Dahlem Center for Complex Quantum Systems, Freie Universit\"{a}t Berlin, 14195 Berlin, Germany}

\newcommand{\hhi}{Fraunhofer Heinrich Hertz Institute, 10587 Berlin, Germany}
\newcommand{\leiden}{Leiden University, Niels Bohrweg 1, 2333 CA Leiden, Netherlands}
\newcommand{\vw}{Volkswagen Group Innovation, Berliner Ring 2, 38440 Wolfsburg, Germany}
\newcommand{\physwaterloo}{Department of Physics and Astronomy, University of Waterloo, ON N2L 3G1, Canada}
\newcommand{\vinst}{Vector Institute, Toronto, ON M5G 0C6, Canada}

\begin{document}

\title{Double descent in quantum kernel methods}

\author{Marie Kempkes}
\email{marie.kempkes@volkswagen.de}
\affiliation{\leiden}
\affiliation{\vw}
\author{Aroosa Ijaz}
\email{a4ijaz@uwaterloo.ca}
\affiliation{\physwaterloo}
\affiliation{\vinst}
\affiliation{\fu}
\author{Elies Gil-Fuster}
\affiliation{\fu}
\affiliation{\hhi}
\author{Carlos Bravo-Prieto}
\affiliation{\fu}
\author{Jakob Spiegelberg}
\affiliation{\vw}
\author{Evert van Nieuwenburg}
\affiliation{\leiden}
\author{Vedran Dunjko}
\affiliation{\leiden}

\begin{abstract}
    The double descent phenomenon challenges traditional statistical learning theory by revealing scenarios where larger models do not necessarily lead to reduced performance on unseen data. While this counterintuitive behavior has been observed in a variety of classical machine learning models, particularly modern neural network architectures, it remains elusive within the context of quantum machine learning. In this work, we analytically demonstrate that linear regression models in quantum feature spaces can exhibit double descent behavior by drawing on insights from classical linear regression and random matrix theory. Additionally, our numerical experiments on quantum kernel methods across different real-world datasets and system sizes further confirm the existence of a test error peak, a characteristic feature of double descent. Our findings provide evidence that quantum models can operate in the modern, overparameterized regime without experiencing overfitting, potentially opening pathways to improved learning performance beyond traditional statistical learning theory.
\end{abstract}

\maketitle

\section{Introduction} \label{s:introduction}

    \begin{figure*}
        \centering
        \includegraphics{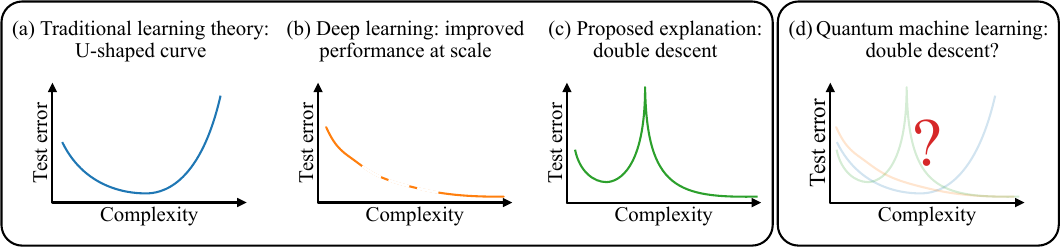}
        \caption{
            \textbf{Conceptual overview of this work.}
            (a) Traditional statistical learning theory predicts a U-shaped relationship between model complexity and test error, where both overly simple and overly complex models have poor generalization.
            (b) Empirical observations in deep learning challenge this view, showing that larger models often exhibit improved performance at scale, although the precise relationship between complexity and error remains elusive.
            (c) The double descent phenomenon provides a theoretical and empirical explanation for this behavior, highlighting a non-monotonic relationship between model complexity and test error.
            (d) This work investigates whether a similar double descent behavior can emerge in quantum machine learning models, offering new insights into their generalization properties.
            The x-axis represents a wide notion of complexity, including for example the size of the training set.
        } 
        \label{fig:4step}
    \end{figure*}

    Recent progress in machine learning (ML) has significantly enhanced data-driven insights across science and industry.
    At the same time, there is growing interest in understanding how quantum computers can offer improvements in solving computational problems in machine learning.
    This question is central within the field of quantum machine learning (QML), which seeks to harness quantum effects for learning from data~\cite{biamonte2017quantum, dunjko2018machine, carleo2019machine, schuld2021machine, cerezo2021variational, bharti2022noisy}.
    Researchers have identified specific learning problems for which QML could potentially outperform their classical counterparts~\cite{liu2021rigorous, sweke2021quantum, huang2021information, gyurik2023exponential, pirnay2023superpolynomial, molteni2024exponential}.
    However, while similarities and differences between QML and classical ML are continually being elucidated~\cite{gil2024relation, schuld2021supervised, landman2022classically, gil2024understanding, sweke2023potential}, a critical distinction remains: classical ML has demonstrated large-scale success in real-world applications, whereas QML has yet to achieve comparable practical breakthroughs.

    Understanding this gap requires deeper insight into the generalization properties of the models, i.e., their performance on unseen data.
    Traditional statistical learning theory suggests that increasing the so-called model complexity, typically quantified by measures such as the number of trainable parameters, the VC dimension~\cite{vapnik1971uniform, valiant1972learnable,Vapnik1999SLT}, among others~\cite{ShalevShwartz2014understanding},
    often leads to poor generalization.
    In this context, complexity can include not only model-specific characteristics but also broader factors, such as the size of the training dataset.
    The behavior predicted by traditional learning theory is visualized as a U-shaped curve, where overly complex models tend to memorize training data rather than capturing meaningful patterns (Fig.~\ref{fig:4step}(a)).
    However, empirical evidence from deep learning over the past decade challenges this notion.
    Large neural networks often exhibit improved generalization at scale~\cite{krizhevsky2012advances, neyshabur2015insearch, zhang2017understanding, novak2018sensitivity, nakkiran2021deep, belkin2019reconciling}, contradicting the theoretically predicted trade-off between complexity and generalization (Fig.~\ref{fig:4step}(b)).
    A growing body of literature has sought to explain this surprising behavior~\cite{cooper2018losslandscape, spigler2019jamming, neyshabur2018theroleof, Zhu2019aconvergencetheory, ADVANI2020High, Adlam2020understanding, geiger2020scaling, dar2021farewellbiasvariancetradeoff, Jason2022Memorizing, curth2023Uturn, schaeffer2023doubledescentdemystifiedidentifying}, attributing it in part to the so-called \textit{double descent} phenomenon.
    
    Double descent describes a non-monotonic relationship between test error and model complexity.
    Initially, the test error follows the expected U-shaped curve but then, counterintuitively, begins to decrease again as model complexity surpasses a critical threshold (Fig.~\ref{fig:4step}(c)).
    This behavior marks the transition from the underparameterized regime, where the model lacks sufficient capacity to represent the data, to the overparameterized regime, where the model has more parameters than necessary to fit the training dataset.
    Beyond this threshold, as observed empirically, the test error often decreases asymptotically until it saturates.
    It is important to note, however, that there is no guarantee of achieving saturation below the minimum of the U-shaped curve.
    These insights naturally motivate the research question central to our work (Fig.~\ref{fig:4step}(d)): \emph{Are there quantum machine learning models that exhibit double descent behavior?} 

    In this work, we provide a theoretical analysis of double descent for linear regression models in quantum feature spaces, prominently including quantum kernel methods.
    By leveraging the fact that those models act linearly in their corresponding Hilbert space, we derive an expression for the test error, building on insights from Ref.~\cite{schaeffer2023doubledescentdemystifiedidentifying}.
    Using the Mar\v{c}enko-Pastur law, a seminal result in random matrix theory, we analytically demonstrate that, in the asymptotic limit and under reasonable assumptions about the data distribution, the test error peaks at the interpolation threshold as the quantum model transitions from underparameterization to overparameterization, a key feature of the double descent phenomenon.
    We further test our theoretical findings with numerical experiments on quantum kernels, conducted on several datasets, including real-world data, for varying system sizes and different families of quantum feature maps.
    Consequently, our results confirm that \emph{certain quantum models exhibit the characteristic double descent behavior}.

    To the best of our knowledge, this work represents the first empirical and theoretical observation of double descent in quantum models.
    Our findings provide a foundational perspective on this phenomenon, paving the way for further exploration into its implications for model performance and generalization in quantum machine learning.

\section{Related concepts in QML} \label{s:relatedworks}

    The double descent phenomenon challenges the validity of conclusions drawn from statistical learning theory, thereby undermining the relevance of generalization bounds derived from such framework.
    In a similar vein, Ref.~\cite{gil2024understanding} recently revealed flaws in the prevalent approach to studying generalization in QML, known as \emph{uniform} generalization bounds.
    These bounds are uniform over the entire set of functions the learning model can output, regardless of the data distribution, the learning algorithm, or the specific function selected by the model.
    Earlier, Ref.~\cite{peters2023generalization} took a significant step away from uniform generalization bounds by showcasing instances of \emph{benign overfitting} in QML.
    In particular, the authors identified scenarios where quantum learning models could fit the training data perfectly while still performing well on new test data, thus defying the traditional picture of statistical learning theory.

    A well-known obstacle to successful QML at scale has been identified as exponential concentration of the functions output by the learning model~\cite{larocca2024review}.
    While exponential concentration is often linked to problems in trainability~\cite{gil2024relation}, Ref.~\cite{thanasilp2024exponentialconcentrationquantumkernel} revealed that it can also adversely affect generalization.
    Specifically, exponential concentration may manifest itself in quantum feature maps suffering from a \emph{vanishing similarity} problem~\cite{thanasilp2024exponentialconcentrationquantumkernel}, where the similarity measure between data points becomes exponentially small as the number of qubits increases.
    As a result, the number of measurement shots required to reliably evaluated the labeling function scales exponentially, posing a significant hurdle for practical implementations.
    Our theoretical analysis diverges from this particular challenge.
    We focus on the exact evaluation of the functions, avoiding reliance on finite-shot approximations.
    This distinction places our approach on a different footing than the analysis presented in Ref.~\cite{thanasilp2024exponentialconcentrationquantumkernel}.  
    
    Lastly, let us note that our work employs the standard notion of overparameterization from classical machine learning literature.
    This choice contrasts with the alternative definition presented in Ref.~\cite{larocca2023theory}, which follows a different framework.
    For further details on related and prior work, we refer the reader to Appendix \ref{a:related_work}.

\section{Linear regression in quantum feature space} \label{s:preliminaries}

    In this section, we build upon the derivation presented in Ref.~\cite{schaeffer2023doubledescentdemystifiedidentifying} to analyze the double descent phenomenon in QML.
    Specifically, we apply their framework to matrix-valued feature maps commonly encountered in QML.
    This allows us to explore whether and how the double descent behavior manifests in QML models.
    
    In QML, classical data $x$ is mapped into a quantum feature space via a quantum feature map, denoted as $\rho(x)$.
    This mapping is achieved via a unitary transformation:
        \begin{align}
            x \mapsto \rho(x) = S(x)\rho_0 S^\dagger(x),
        \end{align}
    where $S(x)$ is a unitary operator parameterized by the data, and $\rho_0$ represents an initial $n$-qubit quantum state.
    A common approach to constructing QML models involves considering linear functions of the quantum feature state \cite{schuld2021supervised, jerbi2023quantum}.
    In particular, the model function is given by:
    \begin{align}
        f(x) &= \Tr{\rho(x)\calM},
    \end{align}
    where $\calM\in\Herm(2^n)$ is a quantum observable, which defines a hyperplane in quantum feature space.
    The function $f$ can be evaluated as the expectation value of a quantum measurement specified by $\calM$.
    An example of such quantum linear models are Quantum Kernel Methods~(QKMs), in which a kernel function that quantifies the similarity between quantum feature states plays a central role in defining the model.
    The inner product between two quantum feature states is referred to as an Embedding Quantum Kernel (EQK) function~\cite{schuld2019quantum, havlivcek2019supervised, hubregtsen2022training, gil2024expressivity}:
        \begin{align}
            \kappa_\rho(x, x') = \Tr \{\rho(x)\rho(x')\}.
            \label{eq:eqk}
        \end{align}
    EQKs can be used to find favorable observables $\calM$ tailored to specific tasks.

    In the context of supervised learning, we consider labeled data points $(x,y)\in\bbR^d\times\bbR$, where $x$ represents the input and $y$ the corresponding label.
    Given a training set $\{(x_i,y_i)\}_{i=1}^N$ of size $N$, our objective is to infer the underlying relationship between inputs and labels.
    We consider quantum linear models specified by a quantum feature map $\rho$ using $n$ qubits, and denote by $p$ the number of free parameters in the model.
    In quantum kernel methods, the number of parameters is upper bounded by the dimension of the orthonormal Hermitian basis spanned by $\rho$. Consequently, for embedding quantum kernels on $n$ qubits, the number of parameters is at most $p=4^n$. This remains valid for pure feature maps, provided that the image of $\rho$ spans the full space of density matrices. For more details on the parameter count in quantum kernel methods we refer the reader to Appendix~\ref{a:parametercount}. Note that we use the term \emph{parameters} in a fundamentally different context as in, e.g., parameterized quantum circuits, since the number of parameters here is fully specified by the data encoding map.
    In quantum feature space, the training data consists of quantum density matrices and labels $\{(\rho_i, y_i)\}_{i=1}^N$.
    We omit the explicit $x$-dependence for simplicity and in order to make the framework applicable to linear regression with quantum states which do not arise from classical data.
    We introduce a data matrix $D$ and label matrix $Y$ as
    \begin{align}\label{eq:datamatrix}
        D = \begin{pmatrix} \left[\rho_1\right]^\dagger \\ \vdots \\ \left[\rho_N\right]^\dagger \end{pmatrix}, \quad\quad Y = \begin{pmatrix} y_1 \\ \vdots \\ y_N\end{pmatrix}.
    \end{align}
    We use square bracket notation to highlight that $D$ is \emph{not} a matrix with each $\rho_i^\dagger$ as a submatrix.
    Instead, as detailed in Appendix~\ref{a:regression_detail}, $D$ should be understood as a \emph{vector of co-matrices}, where $\rho_i^\dagger$ represents the $i^\text{th}$ element.
    Since Hermitian matrices form a vector space, a co-matrix belongs to the dual space of this vector space and acts as a linear map from matrices to real numbers.
    This distinction becomes crucial later when encountering expressions such as $D^\dagger D$ or $DD^\dagger$.
    Indeed, $D^\dagger = \left([\rho_1]\ldots[\rho_N]\right)$ is a \emph{co-vector of matrices}.
    With this notation, a \enquote{co-matrix acting on a matrix} is an inner product, $[\rho]^\dagger_i\rho^{\vphantom{\dagger}}_j=\Tr\{\rho_i\rho_j\}$, and a \enquote{matrix acting on a co-matrix} is a tensor product (as the canonical outer product in a vector space, in this case of Hermitian matrices), $\rho^{\vphantom{\dagger}}_i[\rho_j]^\dagger = \rho_i\otimes\rho_j$.
    Finally, the label vector $Y\in\bbR^N$ is a real-valued vector.
    
    In linear least-squares regression, the observable $\calM$, sometimes referred to as \emph{parameter matrix}, is determined by solving an optimization problem based on the given training data.
    Two regimes arise based on the relationship between the number of parameters $p$ and the number of training samples $N$.
    
    \paragraph{Underparameterized regime $(N > p)$:}
    the number of data points exceeds the number of parameters.
    The observable $\calM^u$ is obtained by solving a linear least squares problem~\cite{engl1996regularization}:
    \begin{align}\label{eq:LSsolution}
        \calM^u &= \argmin_\calM \lVert D\calM - Y\rVert^2_2 = \left(D^\dagger D\right)^{-1}D^\dagger Y\,,
    \end{align}
        where we refer to the $p\times p$-dimensional $D^\dagger D = \sum^N_{i=1} \rho_i \otimes \rho_i$ as the sample covariance matrix (we slightly abuse notation by omitting centering and normalization).
        The matrix inverse is well-defined because $D^\dagger D$ is full rank in this regime.
        
    \paragraph{Overparameterized regime $(N < p)$:} the number of parameters exceeds the number of data points.
    In this case, there is a continuum of linear functions that perfectly fit the data $M\coloneqq\{\calM\in\Herm(2^n)\,|\,\Tr\{\rho(x_i)\calM\}=y_i,i\in[N]\}$.
    Among all these, the minimum-norm solution is selected~\cite{engl1996regularization}:
    \begin{align}\label{eq:MNLSsolution}
            \calM^o &= \argmin_{\calM\in M} \lVert\calM\rVert^2_2 = D^\dagger\left(DD^\dagger\right)^{-1} Y\,,
        \end{align}
        where $DD^\dagger = \left(\Tr{\rho_i \rho_j}\right)^N_{i,j=1}$ is the $N \times N$ Gram matrix, and its inverse is well-defined in this regime because $DD^\dagger$ is full rank.
    The condition $N=p$ is referred to as the \emph{interpolation threshold}, marking the transition where a model starts to interpolate (reaching nearly zero training error) and shifts from underparameterization to overparameterization.
    
    We compare the models obtained in the two regimes to a hypothetical optimal linear model with corresponding $\calM^\ast$, which achieves the best possible performance:
    \begin{align}
        \calM^\ast &= \argmin_{\calM}\left\{\bbE_{(\rho,y)} \left[ \left(\Tr{\rho\calM} - y\right)^2 \right] \right\},
    \end{align}
    where the expectation value is taken over the underlying data distribution, which we assume to be unknown.
    The true relation between inputs and outputs, called \emph{ground truth}, need not be a deterministic linear function in feature space.
    That means $\calM^\ast$ may still incur some error on individual samples, either due to noisy labels or because the best linear model in feature space cannot accurately capture the ground truth function, a situation known as model misspecification~\cite{dar2021farewellbiasvariancetradeoff}. Note that in this context, minimizing the expected risk therefore does not guarantee good performance.
    
    Consequently, when presented with a new test point $(\rho_t,y_t)$ not present in the training set, neither the empirical risk minimizers $\calM^{u,o}$ nor the expected risk minimizer $\calM^\ast$ are guaranteed to predict the correct label, i.e., $y^{(u,o,\ast)}_t \neq y_t$.
    Here, $y^{(u,o,\ast)}_t \coloneqq \Tr{\rho_t\calM^{(u,o,\ast)}}$ denotes the predicted label on input $\rho_t$ for the underparameterized empirical risk minimizer $(u)$, the overparameterized empirical risk minimizer $(o)$, and the optimal linear model $(\ast)$.
    Since the expected risk minimizer $\calM^\ast$ is independent of the training dataset $\{ (\rho_i, y_i) \}_{i=1}^N$, it can generally incur prediction errors on any training datapoint  $(\rho_i,y_i)$.
    Let us define this \emph{residual} error as $e_i \coloneqq y_i - y^\ast_i$, and denote the vector of residual errors by $E=(e_i)_{i=1}^N$.
    
    We compare the predictions made by the empirical risk minimizers to those of the optimal linear model on unseen data:
    \begin{align}
        y^u_t - y^\ast_t &= \Tr{\rho_t (D^\dagger D)^{-1}D^\dagger E},\label{eq:yut-ystar} \\
        y^o_t - y^\ast_t &= \Tr{\rho_t D^\dagger(DD^\dagger)^{-1}E} \label{eq:yot-ystar} \\
        &\hphantom{=}+ \Tr{\rho_t\left(D^\dagger(DD^\dagger)^{-1}D - \bbI_{2^n\times2^n}\right)\calM^\ast}.
    \end{align}
    A detailed step-by-step derivation of these expressions can be found in Appendix~\ref{a:regression_detail}.

    Interestingly, the distinction between the two expressions (the inverse of the sample covariance matrix or of the Gram matrix) vanishes when we apply the singular value decomposition (SVD) of the data matrix $D=U\Sigma V^\dagger$.
    Here, $U\in\bbU(N)$ is a unitary matrix of left singular vectors, $\Sigma\in\bbR^{N\times(2^n\times2^n)}$ is a rectangular diagonal matrix of singular values, and $V\in\bbU(2^n\times2^n)$ is a unitary matrix of right singular vectors.
    Leveraging the properties of SVD, Eqs.~\eqref{eq:yut-ystar} and~\eqref{eq:yot-ystar} can be rewritten as:
    \begin{align}
        y^u_t-y^\ast_t &= \Tr{\rho_t V \Sigma^+ U^\dagger E}, \\
        y^o_t - y^\ast_t &= \Tr{\rho_t V \Sigma^+ U^\dagger E} \\
        &\hphantom{=} + \Tr{\rho_t\left(D^\dagger(DD^\dagger)^{-1}D - \bbI_{2^n\times2^n}\right)\calM^\ast}.\label{eq:overp_SVD}
    \end{align}
    Here, $\Sigma^+$ denotes the pseudoinverse of $\Sigma$.
    Notably, both the underparameterized and overparameterized cases share a common term, $\Tr{\rho_t V \Sigma^+ U^\dagger E}$.
    The second term in Eq.~\eqref{eq:overp_SVD} is unique to the overparameterized regime and accounts for the underdetermination of the linear optimization problem.
    
    Note that the shared term corresponds to a variance-like and the second term to a bias-like contribution to the test error. Importantly, we compare the learned solution to the optimal linear model for the given training data, rather than to the ground truth. As a result, these contributions differ from the classical definitions of bias and variance in statistical learning theory.
    We refer the reader to Appendix~\ref{a:regression_detail} for more details.
    
    Several classical works have shown that the variance-like term is responsible for the double descent behavior~\cite{Bartlett2020benign, ADVANI2020High, Adlam2020understanding, Hastie2022surprises, Jason2022Memorizing, schaeffer2023doubledescentdemystifiedidentifying, dar2021farewellbiasvariancetradeoff}. All these works consistently demonstrate that the model's sensitivity to fluctuations in the training data is maximal at interpolation, causing the variance-like term to dominate the test error. We further show in our ablation study in Fig.~\ref{fig:ablation_results}(c) that the bias-like term to the test error remains small for quantum kernels, and hence does not explain the peak.
    Our subsequent analysis of the double descent behavior thus focuses exclusively on the variance-like contribution.
    
    To unpack the term further, we first expand it in the orthonormal basis of singular vectors:
    \begin{align}
        \Tr{\rho_t V \Sigma^+ U^\dagger E} &= \sum_{r=1}^R \frac{1}{\sigma_r} \Tr{\rho^V_r\rho^{\vphantom{V}}_t}\langle u_r , E\rangle\,,\label{eq:dd_error}
    \end{align}
    where $R=\min\{N,2^n\times2^n\}$ denotes the rank of $D$, and the right singular vectors $\rho^V_r$ are quantum density matrices.
    This formula highlights three distinct factors that influence the prediction error:
    \begin{enumerate}
        \item The reciprocals of the singular values $1/\sigma_r$.
        \item The interaction of the test input $\rho_t$ with the basis of right singular vectors $\Tr{\rho^V_r\rho^{\vphantom{V}}_t}$.
        \item The projection of $E$ onto the left singular vectors $\langle u_r, E\rangle$.
    \end{enumerate}
    The error decomposition indicates that for double descent to occur, the three contributing factors must become large enough to cause a spike in the prediction error on new test data.

\section{Spectral analysis with random matrix theory}
\label{s:spectralanalysis}

\begin{figure*}[t!]
        \centering
        \includegraphics[width=\textwidth]{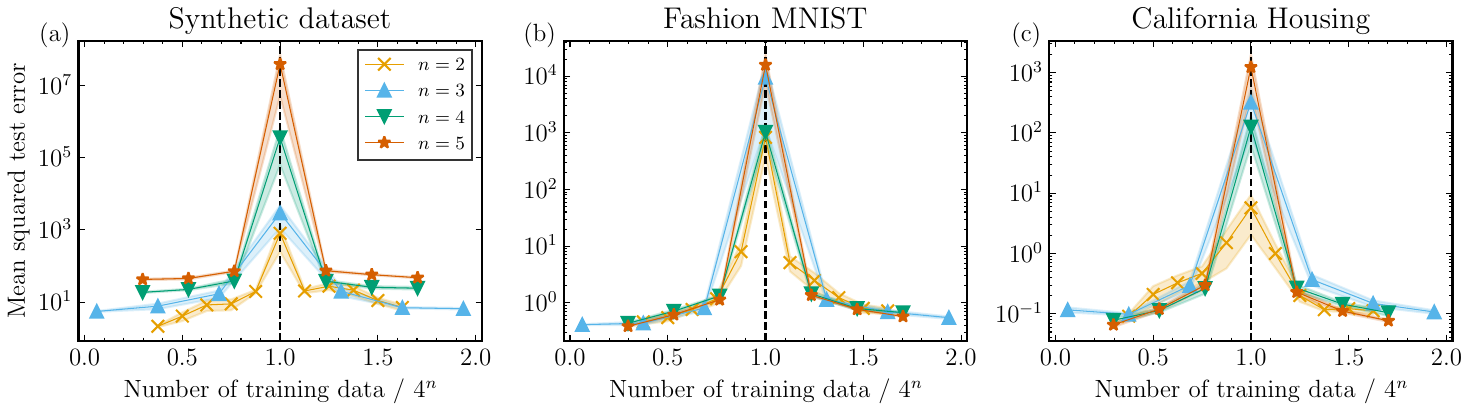}
        \caption{
            \textbf{Empirical evidence of double descent in QKMs.}
            Mean squared test error as a function of the normalized number of training data points $(N/4^n)$ for the (a) Synthetic, (b) Fashion MNIST, and (c) California Housing datasets, employing the EQK in Eq.~\eqref{eq:eqk}.
            The shaded area corresponds to the standard deviation for five independent experiment repetitions, each using independently sampled training data.
            The dotted black line indicates the interpolation threshold. The region to the left of the threshold corresponds to the overparameterized regime, while the region to the right corresponds to the underparameterized regime.
            The consistent peak across all datasets and system sizes $n$ confirms the presence of double descent.
        }
        \label{fig:dd_qkernel}
    \end{figure*}
    
    In this section, we present analytical results that predict a peak in the prediction error at the interpolation threshold $N=p$, for the problem of linear regression in quantum feature space introduced in Section~\ref{s:preliminaries}.
    Our analysis builds on the error decomposition derived in the previous section, which explicitly identified three factors contributing to the prediction error in quantum linear models, as specified in Eq.~\eqref{eq:dd_error}.
    Among the three factors, we now focus specifically on the reciprocal singular values of the data matrix $D$. Under certain conditions, $D$ behaves as a random matrix whose singular values follow a Mar\v{c}enko-Pastur (MP) law. This enables predictions about when the test error is likely to peak.
    Below, we present a simplified version of the relevant theorem and proof.
    Full proof details can be found in Appendix~\ref{a:proofthmtestpeak}.
        
    \begin{theorem}[Test error peak at interpolation -- informal]\label{thm:testpeak}
        Consider a Lipschitz continuous quantum feature map $\rho$ with $p$ linearly independent dimensions, and consider a linear regression problem in the corresponding feature space.
        Let $\{(\rho(x_i),y_i)\}_{i=1}^N$ be a training set, where $x_i \sim \mathcal{N}(0, \mathbb{I}_d)$ are $d$-dimensional i.i.d. Gaussian normal random samples.
        Then, the test error of the linear model based on $\rho(x)$ and Eqs.~\eqref{eq:LSsolution} and~\eqref{eq:MNLSsolution} peaks with high probability at $N=p$, for large enough $N$ and $p$.

        Furthermore, for a quantum feature map on $n$ qubits, the training set size $N$ at which the peak in test error occurs fulfills $N\in\calO(\exp(n))$ with high probability.
    \end{theorem}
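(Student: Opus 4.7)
The plan is to trace the test-error peak to a spectral singularity of the data matrix $D$ and then invoke a Mar\v{c}enko-Pastur-type argument for random matrices whose rows are Lipschitz images of Gaussians. Starting from the variance-like term in Eq.~\eqref{eq:dd_error}, I would isolate the contribution of the reciprocal singular values $1/\sigma_r$ of $D$: the overlaps $\Tr\{\rho^V_r\rho^{\vphantom{V}}_t\}$ are bounded by unity because $\rho_t$ and $\rho^V_r$ have Hilbert-Schmidt norm at most one, and the residual projections $\langle u_r,E\rangle$ are of order one under generic data and a fixed noise level. Hence, near the interpolation threshold the test error is dominated by $1/\sigma_{\min}(D)$, and the problem reduces to showing that $\sigma_{\min}(D)\to 0$ precisely as $N\to p$.

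For the spectral step, the rows of $D$ are vectorised density matrices $[\rho(x_i)]$, which, by the Lipschitz assumption on $\rho$, are $L$-Lipschitz functions of the i.i.d.\ Gaussian samples $x_i\sim\calN(0,\bbI_d)$. The Gaussian concentration inequality then yields that each row has a well-defined covariance $\Sigma_\rho$ of bounded operator norm, and that the empirical spectral distribution of $D^\dagger D/N$ concentrates on a deterministic limit. Invoking the extension of the Mar\v{c}enko-Pastur law to sample covariance matrices whose rows are Lipschitz-Gaussian (i.e., universality beyond the i.i.d.-entry case), the bulk of the limiting spectrum is supported on an interval whose edges scale as $(1\pm\sqrt{c})^2$ in the aspect ratio $c=p/N$. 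At $c=1$ the lower edge vanishes, so $\sigma_{\min}(D)\to 0$ and $1/\sigma_{\min}(D)$ diverges. Propagating this through Eq.~\eqref{eq:dd_error} produces a peak in the test error at $N=p$ with high probability, once $N$ and $p$ are both large enough for the edge estimates to be tight.

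For the scaling claim, I would use that on $n$ qubits the number $p$ of independent dimensions of $\rho$ is upper bounded by $\dim\Herm(2^n)=4^n$, as discussed in Appendix~\ref{a:parametercount}. Hence the peak location $N=p$ lies in $\calO(\exp(n))$.

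The main obstacle is the spectral step. The textbook Mar\v{c}enko-Pastur law assumes i.i.d.\ entries of $D$, whereas here the entries of any given row $[\rho(x_i)]$ are heavily correlated through the smooth dependence on the single Gaussian vector $x_i$. I would therefore appeal to (or adapt) a version of the law that only requires i.i.d.\ rows together with Gaussian-Lipschitz concentration, and combine it with edge (rather than merely bulk) universality to ensure the lower spectral edge genuinely reaches zero at $c=1$ instead of stopping at a small positive constant. A secondary technical point is showing that the overlap factors and residual projections in Eq.~\eqref{eq:dd_error} do not conspire to cancel the $1/\sigma_{\min}$ divergence, which I expect to follow from genericity of the data distribution and the independence of the test point from the training sample.
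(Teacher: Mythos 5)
Your proposal follows essentially the same route as the paper: the error decomposition of Eq.~\eqref{eq:dd_error}, the generalized Mar\v{c}enko--Pastur law for Lipschitz images of Gaussian vectors (the paper invokes exactly the i.i.d.-row, Gaussian--Lipschitz-concentration version you ask for, due to Louart and Couillet), the vanishing of the lower spectral edge $(1-\sqrt{c})^2$ at $c=1$, and the bound $p\le 4^n$ for the exponential scaling. The two technical obstacles you flag --- that weak convergence of the bulk spectrum does not by itself force $\sigma_{\min}(D)\to 0$ without a separate edge statement, and that the overlap and residual factors must not conspire to cancel the $1/\sigma_{\min}$ divergence --- are genuine, but the paper's own proof passes over them just as quickly, so your attempt is no less complete than the original.
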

    \begin{proof}[Proof sketch]
        The proof relies on the error decomposition introduced in Section~\ref{s:preliminaries} and results from random matrix theory. Namely, we employ the generalized Mar\v{c}enko-Pastur law, as derived in Ref.~\cite{louart2018concentration}.
        This law implies that for a Lipschitz-continuous feature map applied to Gaussian normal random vectors, the spectral behavior of the sample covariance matrix is the same as that of the sample covariance matrix constructed from a data matrix of Gaussian random entries.
        In particular, the MP law characterizes the range and probability density of eigenvalues of the sample covariance matrix as a function of system parameters $N$ and $p$.
        As $N,p\to\infty$ at some constant rate, the MP law predicts that the number of non-zero eigenvalues is minimized when $N=p$.
        Since the eigenvalues of the sample covariance matrix correspond to the squared singular values of the data matrix, Eq.~\eqref{eq:dd_error} reaches its maximum when these singular values $\sigma_r$ are smallest.
        Consequently, the test error is maximized with high probability at $N=p$, as $N,p\to\infty$.

        Additionally, the number of linearly independent dimensions of any $n$-qubit quantum feature map cannot exceed the linear dimension of the set of $2^n$-dimensional Hermitian matrices, which is $4^n$.
    \end{proof}
    
    Theorem~\ref{thm:testpeak} suggests that in the asymptotic limit, one might observe the phenomenon of double descent in quantum kernel methods if the data distribution is suitably aligned.
    Two key questions remain to ensure these results are broadly applicable: (1) are typical quantum feature maps indeed Lipschitz continuous? and (2) does the assumption of i.i.d. Gaussian inputs reasonably model realistic data sources?
    First, we establish that a general class of commonly used quantum feature maps are Lipschitz continuous:

    \begin{theorem}[Lipschitz continuity]
        Let $x \in \mathbb{R}^d$. Let $\rho(x)=S(x)\rho_0S^\dagger(x)$, where $S(x) = \prod_{l=1}^L\prod_{k=1}^d S_{lk}(x_k)$, with $S_{lk}(\alpha)=\exp\left(-\frac{i}{2} \alpha H_{lk}\right)$ and $\lambda = \max_{k,l}\lVert H_{lk}\rVert_{\calH}$. Then, the map $x\mapsto\rho(x)$ is Lipschitz continuous with respect to the operator norm with a Lipschitz constant upper bounded by $\sqrt{dL}\lambda$:
        \begin{align}
            \lVert\rho(x)-\rho(x')\rVert_{\calH} &\leq \sqrt{dL}\lambda \lVert x-x'\rVert.
        \end{align}
    \label{thm:lipcontofrhomultidim}
    \end{theorem}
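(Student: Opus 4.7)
The plan is to reduce the problem in three nested steps: from $\rho$ to $S$, from $S$ to its individual gate factors $S_{lk}$, and finally to a one-parameter matrix exponential, where a Duhamel-type estimate gives a clean Lipschitz bound.

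First, since $\rho(x)=S(x)\rho_0 S^\dagger(x)$, I would write
\begin{align*}
\rho(x)-\rho(x') &= [S(x)-S(x')]\,\rho_0\, S^\dagger(x) \\
&\quad + S(x')\,\rho_0\,[S^\dagger(x)-S^\dagger(x')],
\end{align*}
and use submultiplicativity of the operator norm together with $\lVert S(\cdot)\rVert_{\calH}=1$ and $\lVert\rho_0\rVert_{\calH}\leq 1$ to get $\lVert\rho(x)-\rho(x')\rVert_{\calH}\leq 2\,\lVert S(x)-S(x')\rVert_{\calH}$. Second, I would telescope the product $S(x)=\prod_{l,k}S_{lk}(x_k)$ one factor at a time, exploiting that every partial product of unitaries has operator norm one, to obtain
\begin{align*}
\lVert S(x)-S(x')\rVert_{\calH} \leq \sum_{l=1}^{L}\sum_{k=1}^{d}\lVert S_{lk}(x_k)-S_{lk}(x'_k)\rVert_{\calH}.
\end{align*}
Third, applying the fundamental theorem of calculus to the smooth curve $\alpha\mapsto S_{lk}(\alpha)=\exp(-\tfrac{i}{2}\alpha H_{lk})$ yields the single-gate estimate $\lVert S_{lk}(x_k)-S_{lk}(x'_k)\rVert_{\calH}\leq \tfrac{1}{2}\lVert H_{lk}\rVert_{\calH}\,|x_k-x'_k|\leq \tfrac{\lambda}{2}|x_k-x'_k|$.

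To finish, I would combine these bounds and pass from the $\ell_1$ to the $\ell_2$ norm on $\bbR^d$ via Cauchy--Schwarz. Since each coordinate $|x_k-x'_k|$ is summed once per layer, the total is $L\sum_k|x_k-x'_k|\leq L\sqrt{d}\,\lVert x-x'\rVert_2$, yielding the Lipschitz constant claimed up to the exact dependence on $L$ versus $\sqrt{L}$.

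The main obstacle is tightening the $L$-dependence from the straightforward $L\sqrt{d}$ that a gate-by-gate telescoping produces to the claimed $\sqrt{dL}$. A natural attempt is to regard $S$ as the restriction to the diagonal of a function $\tilde S:\bbR^{dL}\to \mathrm{U}(2^n)$ with one formal parameter per gate, apply Cauchy--Schwarz jointly over all $dL$ parameters to get a $\sqrt{dL}$ Lipschitz constant for $\tilde S$, and then pull back along the embedding $x\mapsto(x,\ldots,x)$. The pull-back, however, reintroduces a $\sqrt{L}$ factor, so closing this gap genuinely requires exploiting structure beyond the uniform bound $\lVert H_{lk}\rVert_{\calH}\leq\lambda$, such as orthogonality or non-overlapping support of the generators within a layer, or replacing the worst-case triangle inequality across layers by a sharper estimate that uses the unitary invariance of the spectral norm. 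I would expect the formal proof in the appendix to either invoke such additional structure implicitly or to state and prove the weaker $L\sqrt{d}\lambda$ bound, which suffices for the subsequent application of the generalized Mar\v{c}enko--Pastur law in Theorem~\ref{thm:testpeak}.
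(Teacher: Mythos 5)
Your route is essentially the paper's: the same splitting of $\rho(x)-\rho(x')$ into two terms bounded by $2\lVert S(x)-S(x')\rVert_{\calH}$ via submultiplicativity and unitarity, followed by a gate-by-gate reduction (you telescope; the paper differentiates and applies the mean value theorem, which is the same estimate in integral form) and the single-generator bound $\tfrac12\lVert H_{lk}\rVert_{\calH}\,|x_k-x'_k|$. Your accounting is correct and yields the constant $\sqrt{d}\,L\lambda$.

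Your suspicion about the $\sqrt{L}$ versus $L$ dependence is well founded, and the resolution is that the gap lies in the paper, not in your argument. The appendix proof obtains the extra $\sqrt{L}$ by passing from $\bigl\lVert\sum_{l=1}^{L}A_l\bigr\rVert_{\calH}^{2}$ to $\sum_{l=1}^{L}\lVert A_l\rVert_{\calH}^{2}$, where $A_l=S_{11}\cdots H_{lk}S_{lk}\cdots S_{Ld}$. That inequality is not valid for the operator norm (nor for any norm, absent orthogonality of the summands): taking $A_1=\cdots=A_L=U$ unitary gives $L^2$ on the left and $L$ on the right. Since the theorem assumes nothing about the generators beyond $\max_{k,l}\lVert H_{lk}\rVert_{\calH}\le\lambda$, there is no structure available to rescue the step, and the triangle inequality only gives $\lVert\partial_k S\rVert_{\calH}\le\tfrac{L\lambda}{2}$, hence $\lVert\mathrm{d}S/\mathrm{d}x\rVert_{\calH}\le\tfrac{\sqrt{d}L\lambda}{2}$ and the overall constant $\sqrt{d}\,L\lambda$ — exactly what you derive. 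Your proposed fix via a lifted map $\tilde S$ on $\bbR^{dL}$ correctly fails for the reason you give (the pull-back along the diagonal embedding reintroduces $\sqrt{L}$). None of this affects the downstream use: Theorem~\ref{thm:testpeak} and Corollary~\ref{cor:testpeakproj} only require Lipschitz continuity with \emph{some} finite constant, so the weaker $\sqrt{d}\,L\lambda$ bound that both you and (after correction) the paper actually establish is entirely sufficient.
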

    \begin{proof}[Proof sketch]
        The Lipschitz constant is derived by bounding the derivatives of $\rho(x)$, which in turn depend on the derivatives of $S(x)$.
        The nested product structure of $S(x)$ facilitates explicit computation of these derivatives.
        Full proof details are provided in Appendix~\ref{a:proofthmlipcont}.
    \end{proof}

    Second, while the assumption of i.i.d. Gaussian inputs is under active study, there is a broad agreement on its plausibility for linear models in large-dimensional settings (large $p$)~\cite{Montanari2017universality, couillet2022random, pesce2023gaussiandata, adomaityte2024classification}.
    For instance, Ref.~\cite{seddik2020random} provides an intuitive justification using  \emph{Generative Adversarial Network} (GAN) models \cite{goodfellow2020GAN}.
    GANs generate remarkably realistic images by using Gaussian random noise as input.
    Under the assumption that GANs accurately replicate real-world data, it follows that real-life data can be approximated by Gaussian noise mapped through a Lipschitz continuous transformation. 
    Since Theorem~\ref{thm:lipcontofrhomultidim} establishes Lipschitz continuity for a broad class of quantum feature maps, the corresponding results from classical theory naturally extend to quantum models employing these maps.
    Accordingly, a (Q)ML model applied to such data can be conceptualized as a three-step process: (1) random Gaussian inputs, (2) hypothetical GAN-like feature map, and (3) the actual (Q)ML model.
    This perspective, further formalized in Ref.~\cite{couillet2022random}, supports the application of Theorem~\ref{thm:testpeak} to the empirical setting explored next in Section~\ref{s:numericalexperiments}.
    Hence, Theorem~\ref{thm:testpeak} can be broadly applied to predict a peak in test error in quantum linear models when the training set size $N$ matches the feature map dimension $p=4^n$.
    
\section{Empirical evidence of double descent in quantum kernel methods}
\label{s:numericalexperiments}

    \begin{figure*}[t!]
        \includegraphics[width=1.0\textwidth]{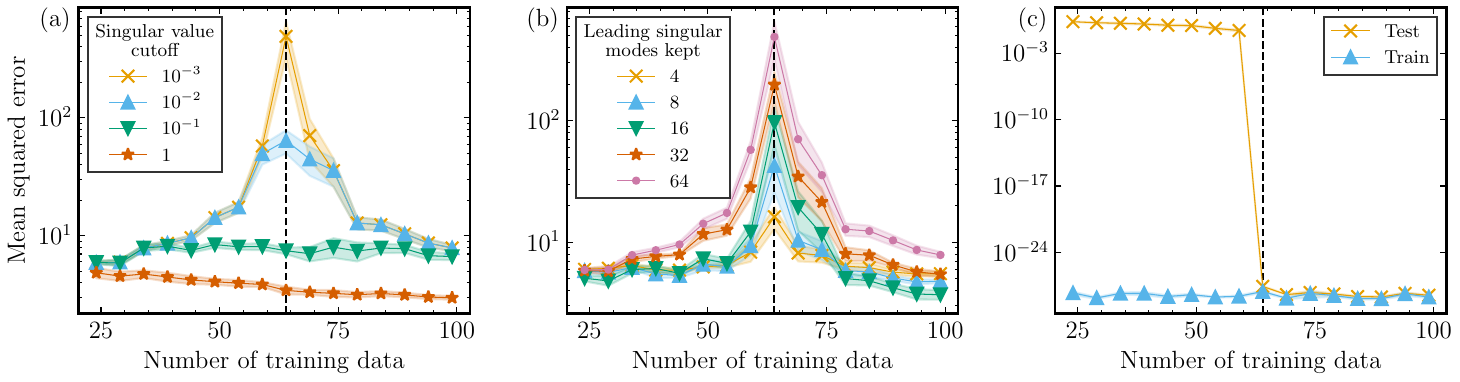}
        \caption{
            \textbf{Ablation experiments.} Mean test squared error as a function of the number of training data points $N$ for (a) applying different cutoffs on the minimum singular value, (b) retaining varying numbers of leading singular modes of the input data, and (c) eliminating residual error.
            All experiments were performed on the Synthetic dataset with $n=3$ qubits.
            The shaded area corresponds to the standard deviation for five independent experiment repetitions, each using independently sampled training data.
            The dotted black line indicates the interpolation threshold. The region to the left of the threshold corresponds to the overparameterized regime, while the region to the right corresponds to the underparameterized regime.
            Modifying each factor reduces or eliminates the double descent peak, highlighting their roles in the test error behavior.
        }
        \centering
        \label{fig:ablation_results}
    \end{figure*}
    
    The theoretical results presented in the previous section hold only in the asymptotic limit and rely on specific assumptions about the input data distribution.
    In this section, we validate our theory by providing empirical evidence of the double descent phenomenon in quantum kernel methods in the finite regime and with real-world data.
    To the best of our knowledge, this is the first empirical demonstration of double descent in QML.

    In our experiments, we consider QML models based on EQKs, as introduced in Eq.~\eqref{eq:eqk}.
    We note that the formalism laid out in Ref.~\cite{schaeffer2023doubledescentdemystifiedidentifying}, which we extend to quantum feature maps in Section~\ref{s:preliminaries}, closely resembles the well-established framework of QKMs~\cite{schuld2021supervised}.
    Specifically, the solution to the linear regression problem based on least squares (Eq.~\eqref{eq:LSsolution}) and minimum norm (Eq.~\eqref{eq:MNLSsolution}) can be recovered with kernel ridgeless regression~\cite{scholkopf2002learning}.
    Here, the term \enquote{ridgeless} denotes the absence of a regularization term in the conventional kernel regression formulation.
    Therefore, implementing ridgeless kernel regression with the EQK is mathematically equivalent to finding the optimal observables $\mathcal{M}^o$ and $\mathcal{M}^u$ as presented in Section~\ref{s:preliminaries}.
    While this connection to regression is clear, we also extend our experiments to classification tasks to further validate the theory.
    
    We perform a series of numerical experiments on three different datasets.
    The first dataset is synthetic, with dimensionality matching the number of qubits $n$.
    The target function is defined as $g(x) = \langle w, x \rangle$ with fixed custom weights $w$ and data points $x$ sampled uniformly from $[-\frac{\pi}{2}, \frac{\pi}{2}]^n$.
    Additionally, we utilized the Fashion MNIST dataset in a binary classification setting by choosing two categories.
    Lastly, for regression analysis on real-world data, we employed the California Housing dataset.
    Both of these datasets are accessible through Ref.~\cite{pedregosa2011scikit}.
    The quantum circuit simulations were performed using the {\tt PennyLane}~\cite{bergholm2018pennylane} software library running on classical hardware.
    The quantum kernel was implemented via an embedding feature map as defined in Eq.~\eqref{eq:eqk}, where each input $x$ is encoded into a quantum circuit comprising single-qubit rotations and entangling gates, ensuring the resulting Gram matrix at interpolation is full rank.
    For further implementation details, refer to Appendix~\ref{a:numerics} or the public code repository~\cite{github_repository}.
    
    To identify the interpolation threshold across different scenarios, we fixed the system size $n$ and varied the number of training samples $N$.
    Note that in our setting, the degree of parameterization can be determined by the ratio of the number of training data to the dimension of the feature space.
    Consequently, varying the number of training samples is somewhat equivalent to varying the feature space dimension.
    The test error curves, depicted in Fig.~\ref{fig:dd_qkernel}, exhibit a clear double descent peak.
    Notably, a sharp peak is observed at interpolation ($N/p=1$), followed by a characteristic drop in test error in the overparameterized regime ($N / p < 1$).
    This pattern is observed consistently across all datasets and system sizes, even though the data is far from Gaussian, as required by Theorem~\ref{thm:testpeak}. This empirical observation suggests that the conclusions drawn under the Gaussianity assumption remain robust in practice, as discussed in Section~\ref{s:spectralanalysis}.
    These findings provide compelling empirical evidence for the presence of double descent in QKMs and highlight its importance as a defining feature of their performance.
    
\section{Ablation study}
\label{s:ablation}
    
    To further solidify that the three factors outlined in Eq.~\eqref{eq:dd_error} directly influence the double descent in QKMs, we conduct a series of numerical experiments where we artificially ablate one factor at a time, inspired by the methodology in Ref.~\cite{schaeffer2023doubledescentdemystifiedidentifying}.
    The results are presented in Fig.~\ref{fig:ablation_results}.

    \paragraph{Singular values:}
    The factor $1/\sigma_r$ captures how the test error scales inversely with the singular values.
    In each experimental run, after sampling a training dataset of vectorized encoded states, we pre-processed the data by applying an explicit cutoff.
    Specifically, singular values below a specified threshold were removed from the singular matrix of the sampled data.
    As shown in Fig.~\ref{fig:ablation_results}(a), smaller cutoff values result in a more pronounced peak, emphasizing the role of small singular values in amplifying the test error.
    
    \paragraph{Test features:} The factor $\Tr{\rho_{r}^{V}\rho^{\vphantom{V}}_t}$ quantifies the overlap between the quantum feature space, spanned by the right singular vectors, and the test data features.
    To isolate this contribution, we pre-processed the test dataset by projecting it onto a subset of (leading) singular modes derived from the underlying distribution.
    This procedure provides a controlled mechanism to reduce the overlap between $\rho_{t}$ and the set $\{\rho_{r}^{V}\}$.
    Only the terms where projected test features have overlap with the training features contribute non-zero terms in the sum in Eq.~\eqref{eq:dd_error}.
    As illustrated in Fig.~\ref{fig:ablation_results}(b), retaining more modes in the pre-processing subset emphasizes the double descent peak.

    \paragraph{Residual error:} If the best model within the hypothesis class incurs no residual error on the training data, the term $\langle u_r, E\rangle$ vanishes, effectively eliminating the double descent peak.
    To test this, we explicitly used the underlying distribution in the fitting process.
    The results, shown in Fig.~\ref{fig:ablation_results}(c), demonstrate that under these conditions, the model perfectly fits the test data at interpolation and in the underparameterized regime, causing the test error to drop to zero. This is in agreement with the error decomposition in Eq.~\eqref{eq:overp_SVD}, where the second term only appears in the overparameterized regime. Note that this does not imply that the underparameterized regime is preferable in general. It only is preferable in settings where the ground truth model is known to lie within the hypothesis class (and the data is noise-free), which is rarely the case in realistic scenarios.

    These ablation experiments collectively demonstrate how all the three factors interact to produce the observed double descent behavior.
    Intuitively, when test features have a significant presence in the feature dimensions that are poorly represented by the training dataset (corresponding to small singular values of the training data), the model must extrapolate to fit these underrepresented dimensions~\cite{Jason2022Memorizing}. This extrapolation results in higher test error.
    Taken together, these results provide a clearer understanding of double descent in quantum models and emphasize the importance of carefully managing these factors during model design and training.

\section{The case of projected quantum kernels}\label{s:projected}

       \begin{figure*}[t]
        \centering
        \includegraphics[width=\textwidth]{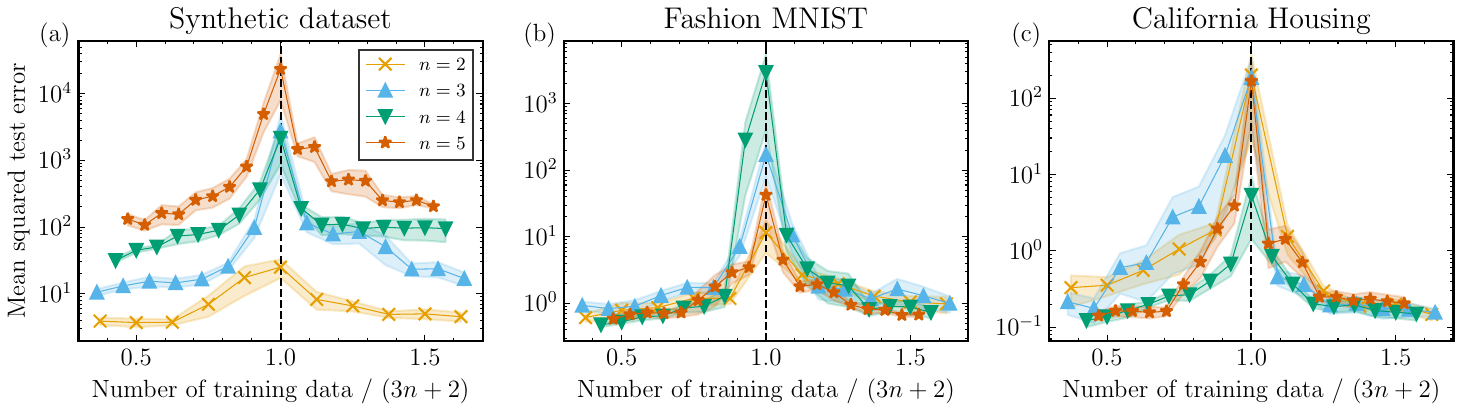} 
        \caption{
            \textbf{Empirical evidence of double descent in RDM projected quantum kernels.} Mean squared test error as a function of the normalized number of training data points $\left(N/(3n+2)\right)$ for the (a) Synthetic, (b) Fashion MNIST, and (c) California Housing datasets, employing the RDM kernel in Eq.~\eqref{eq:1rdm}.
            The shaded area corresponds to the standard deviation for five independent experiment repetitions, each using independently sampled training data.
            The dotted black line indicates the interpolation threshold. The region to the left of the threshold corresponds to the overparameterized regime, while the region to the right corresponds to the underparameterized regime.
            The consistent peak across all datasets and system sizes $n$ confirms the presence of double descent.
        }
        \label{fig:dd_pqk}
    \end{figure*}

    A natural question arises: can we control the location of the interpolation threshold in our experiments with QKMs? So far, we have successfully observed a peak in prediction error at the interpolation threshold, $p=N$, both analytically and empirically.
    However, we note that the quantum feature map we used in Section~\ref{s:numericalexperiments} resulted in an exponential-dimensional feature space with respect to the number of qubits $n$, specifically $p=4^n$.
    As a result, while our experiments provide evidence of double descent in quantum kernel methods, the peak in error occurs only for exponentially large training sets.
    We now introduce an alternative family of feature maps, referred to as \emph{projected feature maps}, whose feature space dimension (and hence the number of free parameters) is linear in the number of qubits $p\in\calO(n)$. Note that the exact value of $p$ for projected quantum kernels depends on the specific kernel design (here it is $3n+2$).     
    By employing these maps, we can effectively shift the interpolation threshold and facilitate a more controlled and efficient double descent setting.

    To construct a projected quantum feature state, we begin with a conventional quantum feature map $\rho$.
    This state is then transformed into a $n$-dimensional vector of single-qubit reduced density matrices (RDMs). This transformation is achieved using partial trace operations, also known as \emph{projections}:
    \begin{align}
    \label{eqn:rho_rdm}
        \rho_{\text{RDM}}(x) &= \left( \rho_{(k)}(x) \right)_{k=1}^{n} = \left( \Tr_{j \neq k} \{\rho(x)\}\right)_{k=1}^{n}.
    \end{align}
    Here, $\rho_{(k)}(x)$ represents the RDM corresponding to the $k^\text{th}$ qubit.
    Importantly, $\rho_{\text{RDM}}(x)$ forms a \emph{vector of matrices}, and we define their inner product as:
    \begin{align}
    \label{eqn:kappa_rdm}\langle\rho_{\text{RDM}}(x),\rho_{\text{RDM}}(x')\rangle &= \frac{1}{n}\sum_{k=1}^n\Tr\{\rho_{(k)}(x)\rho_{(k)}(x')\}\,.
    \end{align}
    Indeed, such projected feature maps also give rise to EQK functions on the Hilbert space of reduced density matrices $\kappa_{\text{RDM}}(x,x') = \langle\rho_{\text{RDM}}(x),\rho_{\text{RDM}}(x')\rangle$.
    We refer to this kernel as RDM kernel.
    Similar constructions have been explored in Refs.~\cite{huang2021power, gan2023unified}.
 
    In this section, we apply results from Sections~\ref{s:spectralanalysis} and~\ref{s:numericalexperiments} to the RDM kernel.
    Our primary focus lies on understanding the training set size at which the peak in test error occurs in this setting, both theoretically and empirically.
    We begin by extending Theorem~\ref{thm:testpeak} to apply to projected quantum feature maps.
    \begin{corollary}[Test error peak for projected feature maps]\label{cor:testpeakproj}
        Consider a quantum kernel model based on the RDM projection of an $L$-Lipschitz continuous quantum feature map $\rho(x)$ on $n$ qubits.
        Let $\{x_i\}_{i=1}^N$ be a training set of $d$-dimensional i.i.d. Gaussian random samples $x_i \sim \mathcal{N}(0, \mathbb{I}_d)$.
        Then, the training set size $N$ at which the peak in test error occurs with high probability is linear in the number of qubits; $N\in\calO(n)$.  
    \end{corollary}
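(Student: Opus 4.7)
The plan is to reduce the corollary to a direct application of Theorem~\ref{thm:testpeak} applied to the projected feature map $\rho_{\text{RDM}}$, re-using the same Lipschitz-continuity plus generalized Mar\v{c}enko--Pastur machinery. Only two ingredients need to be re-examined in the projected setting: (i) the Lipschitz continuity of $\rho_{\text{RDM}}$, and (ii) the number $p$ of linearly independent dimensions of the corresponding feature map, since the Gaussianity of the inputs $x_i$ is unchanged.

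First, I would verify that $\rho_{\text{RDM}}$ inherits Lipschitz continuity from $\rho$. The partial trace $\Tr_{j\neq k}$ is a linear, CPTP map and hence a contraction with respect to the Hilbert--Schmidt norm, so for each $k$ the single-qubit reduced map $x\mapsto\rho_{(k)}(x)$ is Lipschitz with a constant bounded by $L$, the Lipschitz constant of $\rho$ provided by Theorem~\ref{thm:lipcontofrhomultidim}. Aggregating the $n$ qubit contributions through the inner product defined in Eq.~\eqref{eqn:kappa_rdm} gives a Lipschitz constant for the full projected map of order at most $\sqrt{n}\,L$, which is polynomially bounded and therefore compatible with the hypotheses of the concentration result in Ref.~\cite{louart2018concentration} used in Theorem~\ref{thm:testpeak}.

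Second, I would bound the number of linearly independent dimensions $p$ of the projected feature space. Each single-qubit RDM lives in the real vector space of $2\times 2$ Hermitian matrices, which is $4$-dimensional; the trace-normalization constraint removes one further degree of freedom per qubit. Assembling the $n$ RDMs into the vector $\rho_{\text{RDM}}(x)$ and accounting for the fact that identity components from different qubits are linearly dependent in the inner product of Eq.~\eqref{eqn:kappa_rdm}, one obtains a feature-space dimension at most $4n$ (and exactly $3n+2$ for the specific construction employed in Section~\ref{s:numericalexperiments}). Crucially, $p\in\calO(n)$ regardless of the precise constant.

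With these two ingredients in place, Theorem~\ref{thm:testpeak} applies verbatim to $\rho_{\text{RDM}}$ and predicts a peak in test error at $N=p$. Since $p\in\calO(n)$, the training-set size at which the peak occurs is linear in the number of qubits, as claimed. The main technical obstacle is not conceptual but rather bookkeeping: one must make sure that the polynomial-in-$n$ Lipschitz constant of the projected map is admissible under the concentration hypotheses of Ref.~\cite{louart2018concentration}, and that the dimension count correctly identifies the rank of the effective data matrix $D_{\text{RDM}}$ so that the MP-law argument locates its spectral edge at $N=p$. Both of these reduce to the same kind of linear-algebraic verification already carried out for Theorem~\ref{thm:testpeak} and carry no genuine new difficulty.
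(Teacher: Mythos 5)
Your proposal is correct and follows essentially the same route as the paper's proof: establish Lipschitz continuity of the projected map via the partial trace, count the linearly independent dimensions of the single-qubit RDM vector to get $p\in\calO(n)$, and then invoke Theorem~\ref{thm:testpeak}. Your version is slightly more explicit about the resulting Lipschitz constant and the $3n+2$ dimension count, but these are elaborations of the same argument rather than a different approach.
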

    \begin{proof}
        To establish this result, we prove the following: (1) the projected feature map is Lipschitz continuous, and (2) the number of linearly independent dimensions $p$ satisfies $p\in\calO(n)$.
        For (1), the Lipschitz continuity of the projected feature map follows directly from the fact that the partial trace operation is Lipschitz continuous and that the composition of Lipschitz-continuous maps is also Lipschitz continuous.
        For (2), consider the linear dimension of the set of $1$-qubit states.
        The RDM kernel function defined in Eq.~\eqref{eqn:kappa_rdm} is an inner product in a $4n$-dimensional complex vector space:
        \begin{align}
            \kappa_\text{RDM}(x,x') &= \frac{1}{n}\sum_{k=1}^n\Tr\{\rho_{(k)}(x)\rho_{(k)}(x')\}\\
            &= \sum_{k=1}^n\sum_{i,j\in\{0,1\}} \frac{\left[\rho_{(k)}(x)\right]_{i,j}}{\sqrt{n}}\frac{\left[\rho_{(k)}(x')\right]_{i,j}^\ast}{\sqrt{n}}.\label{eq:1rdm}
        \end{align}
        Although the summation contains $4n$ terms, not all terms are independent.
        The unit-trace property of each reduced density matrix removes approximately $n$ terms. Consequently, the number of independent real components scales as $p\in\calO(n)$, and so the interpolation threshold appears at $N\in\calO(n)$.
    \end{proof}

    We repeat our numerical experiments using the RDM kernel, and the results are presented in Fig.~\ref{fig:dd_pqk}.
    Once again, we observe the characteristic double descent peak, now arising from projected quantum feature states.
    These results can be directly compared to Fig.~\ref{fig:dd_qkernel}, which corresponded to the general embedding quantum feature map.
    Notably, the training set size $N$ at which the test error peak occurs now scales linearly with the number of qubits $n$ when using the RDM kernel.
    This linear behavior contrasts sharply with the exponential scaling observed in Section~\ref{s:numericalexperiments}.
    These empirical findings strongly align with the theoretical predictions.

\section{Discussion} \label{s:discussion}

    The double descent phenomenon is a leading concept in understanding the success of deep learning, demonstrating how large models can outperform predictions from traditional statistical learning theory.
    In this work, we demonstrate that quantum kernel methods can also exhibit double descent.
    We present analytical results derived from a combination of linear regression insights and random matrix theory to understand the occurrence of the characteristic double descent behavior.
    Furthermore, we test our theory in extensive numerical experiments on synthetic and real-world datasets.
    To the best of our knowledge, this is the first study to identify this phenomenon in quantum machine learning (QML), marking a significant step forward in understanding the behavior of overparameterized quantum models.
    We now discuss the implications of these results and propose directions for future research.

    Our findings contribute to the ongoing developments in QML by aligning with and expanding the scope of existing studies.
    To begin with, our work is in line with the insights from Ref.~\cite{gil2024understanding}, which advocates for a paradigm shift in assessing the performance of QML models, moving beyond conventional statistical learning theory.
    Moreover, Ref.~\cite{peters2023generalization} presents a tailored setting of large PQC-based QML models that show improved test errors compared to statistical learning theory predictions.
    By introducing the double descent formalism, our work provides a complete complementary perspective, supported by results for common models on real-world data.
    Furthermore, although not directly contradicting Ref.~\cite{du2023problem}, our demonstration of double descent in quantum kernel methods provides a complementary perspective to their findings, which highlight that \emph{quantum classifiers undergo a U-shaped risk curve, in contrast to the double-descent risk curve of deep neural classifiers}.
    To our understanding, the conclusions in Ref.~\cite{du2023problem} may stem from specific scenarios where the model fails to fit the training set adequately at scale, resulting in large test errors.
    
    The framework presented applies to a broad class of linear models in quantum feature space, including certain quantum kernel methods~\cite{schuld2021supervised}, as well as specific extreme learning machines~\cite{innocenti2023potential, xiong2024fundamentalaspectsquantumextreme}.
    An important open question is whether similar double descent dynamics emerge in parameterized quantum circuits (PQCs), including those involving data re-uploading~\cite{perez2020data}.
    Recent work~\cite{thabet2024quantum} suggests that PQCs do not always converge to the minimum-norm solution in the overparameterized regime.
    Thus, further investigation is required to adapt the double descent framework to PQCs, a direction we leave for future work.

    Another important aspect to discuss is the connection of our work to the concentration phenomenon observed in quantum kernel methods~\cite{thanasilp2024exponentialconcentrationquantumkernel}.
    Our work does not solve concentration. Indeed, exponentially concentrated kernels with flat spectra will yield poor generalization in practice regardless. It is however interesting to understand the test performance expected from quantum kernels that do not suffer from this phenomenon. For such viable, non-exponentially-concentrated kernels, our analysis predicts the potential for a double descent curve (Fig.~\ref{fig:4step}), suggesting that test error may not always increase monotonically with complexity. Although this work focuses on the peak, the practical benefit hinges on achieving a deeper second dip in test error. Notably, a promising observation is that, in classical machine learning, the second dip is usually deeper~\cite{belkin2019reconciling, nakkiran2021deep}.

    On a broader level, our work establishes a foundational understanding of double descent in QML models, laying the groundwork for further exploration into the overparameterized regime and its implications for learning performance in quantum machine learning. \\

\subsection*{Code and data availability}
    The code and data used in this study are available in GitHub~\cite{github_repository}.

\subsection*{Acknowledgments}
    The authors thank Zo\"e Holmes, Evan Peters, Franz Schreiber, and Supanut Thanasilp for useful comments on a previous version of this manuscript.
    A.I. thanks Juan Carrasquilla, Hsin-Yuan Huang, Evan Peters, and Jason Rocks for helpful discussions.  
    A.I. also acknowledges support from the Natural Sciences and Engineering Research Council (NSERC), the Shared Hierarchical Academic Research Computing Network (SHARCNET), Compute Canada, and the Canadian Institute for Advanced Research (CIFAR) AI chair program.
    Resources used in preparing this research were provided, in part, by the Province of Ontario, the Government of Canada through CIFAR, and companies sponsoring the Vector Institute.  
    E.G.-F. is a 2023 Google PhD Fellowship recipient and acknowledges support by the Einstein Foundation (Einstein Research Unit on Quantum Devices), BMBF (Hybrid), and BMWK (EniQmA), as well as travel support from the European Union’s Horizon 2020 research and innovation programme under grant agreement No 951847.    
    \mbox{C.B.-P.} acknowledges support by the BMBF (MUNIQC-Atoms).    
    E.v.N. acknowledges support from the Dutch National Growth Fund (NGF), as part of the Quantum Delta NL programme.    
    V.D. acknowledges support from the Dutch Research Council (NWO/OCW), as part of the Quantum Software Consortium programme (project number 024.003.03), and co-funded by the European Union (ERC CoG, BeMAIQuantum, 101124342).

\subsection*{Disclaimer}
    The results, opinions, and conclusions expressed in this publication are not necessarily those of Volkswagen Aktiengesellschaft or those of the European Union or the European Research Council.
    None of the parties involved can be held responsible for them.

\bibliography{references}

\vspace*{0.5cm}

\onecolumngrid
\appendix

\newpage

\setcounter{figure}{0}
\setcounter{theorem}{0}

\begin{center}
\large{Supplementary Material for ``Double descent in quantum kernel methods''
}
\end{center}
\counterwithin{figure}{section}

\section{Prior and related work}
\label{a:related_work}

    In this section, we recap existing works that study the generalization performance of large quantum models and analyze how our work connects with them.
    We also clarify various definitions of overparameterization that are used in the quantum machine learning (QML) literature.
    To avoid confusion, we first list the various types of QML models generally considered in supervised QML literature.
    (1) Implicit quantum models are linear models that output $ \bra{0} U^{\dagger}(x) \calM U(x) \ket{0} $, where $U(x)$ is the fixed encoding unitary.
    Observable $\calM$ is a linear combination of the encoded training data, whose coefficients can be solved for classically.
    (2) Explicit quantum models are linear models that output $ \bra{0} U^{\dagger}(x) \calM U(x) \ket{0} $, where $U(x)$ is the fixed encoding unitary.
    Observable $\calM$ is given by $V^{\dagger}(\theta)O V(\theta)$, where $V(\theta)$ represents parameterized unitaries and $O$ is a fixed observable.
    The circuit parameters are generally optimized classically via gradient-based methods.
    (3) Data re-uploading models are linear models that output $\bra{0} W^{\dagger}(x, \theta) O W(x,\theta) \ket{0}$, where $W(x,\theta)= \prod_{l=1}^{L} V_l(\theta) U_l(x)$ is composed of alternating (and potentially distinct) encoding and variational layers.
    The encoding part consists of fixed gates and the trainable parameters are optimized classically.
    In general, the encoding unitary in any of these models can be made variational as well.
    This is especially useful in metric and representation learning applications \cite{lloyd2020quantumembeddings}.
    
    Various machine learning phenomena that connect model performance to its complexity include overfitting, benign overfitting, and double descent.
    Overfitting~\cite{ShalevShwartz2014understanding} occurs when a learning model performs well on the training data but fails to perform well on unseen test data.
    It can generally be attributed to high model complexity, although excessive data noise or excessive training can also lead to it.
    Benign overfitting refers to the phenomenon where interpolating models (with nearly-zero training error) can still generalize effectively~\cite{Muthukumar2019Harmless, Bartlett2020benign, Hastie2022surprises}.
    It is generally not concerned with when and how the interpolation regime begins and what are the necessary conditions to achieve it.
    In contrast, double descent~\cite{belkin2019reconciling} captures the non-monotonic behavior of the test error as model complexity varies, with the test error decreasing after an initial peak.
    Importantly, double descent is not inherently tied to or caused by interpolation; it can occur even in models that do not achieve zero training error.
    Furthermore, the point at which test error begins to descend for the second time is not strictly determined by the location of the interpolation threshold~\cite{curth2023Uturn}.
    
    Overfitting in QML models has been studied in several works using the conventional bias-variance tradeoff argument, although these works have mostly been limited to data re-uploading and explicit quantum models \cite{caro2021encoding, banchi2021generalization, chen2021ontheexpressibility, du2023problem, gyurik2023structural, peters2023generalization, haug2024generalization}.
    Various strategies have been proposed to overcome overfitting in these models, including qubit dropout \cite{schuld2020circuitcentric}, gate dropout \cite{kobayashi2022overfitting, scala2023ageneral} and parameter pruning \cite{Sim2021adaptivepruning, Haug2021capacity, Ohno2024adaptivepruning}.
    
    To our knowledge, our work is the first empirical demonstration of the double descent phenomenon in QML.
    The question of why double descent has not been previously observed for quantum models remains open.
    One possible explanation is that QML is mirroring the historical development of classical machine learning (ML).
    With the onset of the use of heavily parameterized models and high-dimensional data, large classical models did not always exhibit benign overfitting, nor was double descent and an interpolation peak consistently observed.
    Recent developments show that this could be due to how generalization performance was measured \cite{curth2024classical} and how bias and variance were conventionally defined \cite{Jason2022Memorizing}, hinting that the traditional bias–variance tradeoff argument based on model complexity breaks down in the highly overparameterized regime \cite{dar2021farewellbiasvariancetradeoff}.
    
    It is possible that QML literature is following a similar path.
    Some recent works have highlighted that most generalization bounds derived in the QML literature become vacuous in the overparameterized regime \cite{du2023problem, gil2024understanding}.
    For example, Ref.~\cite{chen2021ontheexpressibility} studied overfitting in parameterized quantum circuits, showing that the expressivity and VC dimension of these quantum circuits saturate with increasing circuit depth.
    However, it has been shown in both classical~\cite{curth2023Uturn} and quantum~\cite{gil2024understanding} settings that VC dimension is not a reliable measure to study generalization when scaling model complexity.
    Another example is Ref.~\cite{du2023problem} which proposed that large QML models cannot achieve low training errors while, in contrast, Ref.~\cite{gil2024understanding} showed that some quantum models can perfectly fit training data even with label noise and Ref.~\cite{peters2023generalization} presented scenarios where large quantum models can perform well and display benign overfitting.
    We look at Ref.~\cite{peters2023generalization} and Ref.~\cite{du2023problem} in more detail below.
    The takeaway message of these recent developments is that there is a need to redefine how we connect model complexity to generalization as we scale the data dimension and model complexity, which we leave to follow-up work.

    Benign overfitting (or harmless interpolation) phenomenon was investigated in explicit quantum models in Ref.~\cite{peters2023generalization} using spiking models, which is one of the established techniques in classical ML literature.
    To perfectly fit training points while allowing for general prediction, this technique allows the predictor function to have a \enquote{spiked} localized behavior near training points while being smooth in other areas of the input domain.
    By using the Fourier representation of the functions in the hypothesis class, trigonometric polynomial interpolation is employed to formulate the interpolation problem.
    Hence, overparameterization occurs when the number of Fourier modes of the learning model exceeds the number of Fourier modes of the data-generating distribution.
    The authors first show how the feature weights of an overparameterized Fourier features model can be engineered to have benign overfitting for linear regression.
    Then, they extend the classical results by using the Fourier representation of quantum models \cite{schuld2021effect}.
    This allows them to design quantum models with similar features by tuning the initial quantum states and data-encoding Hamiltonians.
    Hence, Ref.~\cite{peters2023generalization} provides important tools for designing and working with overparameterized quantum models in the harmless interpolation regime.
    While the emphasis of their analysis is not on the precise location of the interpolation threshold, signatures of double descent are evident in their numerical experiments (Fig.~3).
    The authors use synthetic data and design a specific feature map that ensures the appropriate spiking behavior.
    In turn, in our work, we studied the factors that contribute to the peak at interpolation, relying on random matrix theory and the spectral properties of the data matrix to explain it.
    Consequently, our framework works for real-life data, with a non-restrictive generic quantum feature map.
    We leave a thorough study on the similarities between the underlying generalization mechanisms for future work.
    
    Ref.~\cite{du2023problem} studies overfitting and the potential presence of double descent in classification with quantum models.
    The authors use explicit quantum models as quantum classifiers and define overparameterization as the regime when the size of training dataset exceeds the number of data-independent parameterized quantum gates and the circuit unitary becomes a 2-design.
    By considering arbitrary circuit ansatzes (without problem-informed structure), where overparameterized circuits become 2-designs, training is likely to suffer from barren plateaus \cite{larocca2024review}.
    This will result in a U-shaped curve for the total expected error (training error + generalization error), as poor trainability will lead to larger training errors (and hence poor generalization error) as more parameters/gates are added.
    Hence, the authors propose using underparameterized problem-dependent ansatzes instead.
    They reinforce this via numerical studies conducted using hardware-efficient ansatz on the Fashion MNIST dataset, which fails to reach low training errors due to poor trainability.
    This particular setting and the corresponding poor performance of explicit quantum models can also be studied from the generalization perspective (independent of the trainability perspective).
    Along these lines,  Ref.~\cite{peters2023generalization} showed that using states sampled uniformly from the Haar measure will have concentrated feature weights for most known encoding strategies (see their Appendix B.1).
    This will give rise to quantum models that will not achieve benign overfitting when overparameterized.
    However, we note that these results should not rule out the possibility of double descent in all problem-agnostic quantum models (classifiers or otherwise), especially those that do not suffer from barren plateaus, like QCNNs.
    We leave a complete formal study of QCNNs for follow-up work.
    
    Ref.~\cite{du2023problem} also proposes that in order to reach zero training errors, parameterized quantum classifiers should optimize parameters such that the optimal feature states and measurements form a general simplex equiangular tight frame
    (ETF).
    ETFs have been shown to arise in deep classical neural networks as they reach zero training error (when trained for long times), leading to ``neural collapse'' (small variability in the state representation of training data that belong to the same class) \cite{papyan2020prevalence}.
    However, we note that while ETFs and neural collapse (on training data) can mean low training errors, it does not necessarily imply low generalization errors and may even lead to worse performance on test data, as was recently studied in \cite{hui2022limitations}.
    So, neural
    collapse is an optimization phenomenon, not a generalization one.
    This could explain the numerical results in Ref.~\cite{du2023problem} where they use hardware-efficient ansatz to do binary classification on the parity dataset (Fig.~2 and Fig.~J9 in Appendix~J) where only harmful overfitting is observed for the overparameterized classifier.
    Hence, while Ref.~\cite{du2023problem} studies an important aspect of how deep quantum models might behave and represent the data during training, there is a need to connect this to phenomena like benign overfitting and double descent, which we leave for follow-up work.
    
    Other definitions of overparameterization have also been reported in QML literature.
    As an important clarification, we reiterate that Ref.~\cite{Haug2021capacity} and Ref.~\cite{larocca2023theory} use tools from quantum control and quantum information theory to define overparameterization in parameterized quantum circuits.
    These large quantum models have been reported to train faster \cite{kiani2020learningunitaries, wiersema2020exploring, kim2021universal, Grimsley2023adaptive} and show noise resilience \cite{fontana2021evaluating, duschenes2024characterization} in certain settings.
    For instance, a ``computational phase transition'' at the critical point between the under-parameterized and over-parameterized regimes was shown in Ref.~\cite{kiani2020learningunitaries}, where the task is to learn a Haar random unitary.
    The faster convergence and changes in loss landscape to global minima are well understood in classical deep learning.
    We leave the development of a formal connection between the conventional ML definition of overparameterization to that defined using quantum control for follow-up work.
    
    Another important clarification to make is how wide quantum neural network (QNN) literature relates to our work.
    Wide QNNs represent the quantum analog of classical neural networks that are studied in the limit of infinite width (number of hidden layer units).
    This hypothetical setting makes it easier to study the training dynamics of the network, which can then be used to understand the behavior of large models.
    Like their classical counterparts, wide QNNs show a \enquote{lazy training} regime where for wide enough networks, the initial parameters do not change a lot during training \cite{liu2022representationlearning, shirai2024qtk} and the network displays exponentially fast convergence (training
    error decreases exponentially with iterations) to a good local minimum \cite{you2022convergencetheory, Liu2023analytictheory, you2023analyzingconvergence}.
    However, in our work, we use only non-deep overparameterized models -- which are also more practical.
    Hence, we are not only reporting the first empirical proof of double descent in QML but also the first such observation for a non-deep quantum learning model.
    This is not different from classical ML where double descent is ubiquitous in deep learning but has also been observed in many non-deep learning methods like boosting, trees, and linear regression \cite{belkin2019reconciling}.

\section{Remark on the parameter count}\label{a:parametercount}

    In this section, we give further details on the exact number of parameters $p$ in quantum kernel methods. When embedding quantum kernels via a feature map $\rho(x)$ on $n$ qubits, a key quantity is the number of linearly independent real parameters in the representation. This number is upper bounded by the dimension of the vector space of Hermitian matrices acting on a $2^n$-dimensional Hilbert space.

    A general complex matrix of size $d \times d$ has $d^2$ complex entries, corresponding to $2d^2$ real parameters. However, quantum states are represented by density matrices, which are Hermitian, positive semi-definite, and trace-normalized. Hermiticity alone imposes that the matrix entries satisfy $\rho = \rho^\dagger$, which reduces the number of real parameters from $2d^2$ to $d^2$.

    This is seen by decomposing any Hermitian matrix $M$ as $M = S + iA$, where $S$ is a real symmetric matrix and $A$ is a real antisymmetric matrix with zero diagonal. The number of independent real parameters in this decomposition is $\frac{d(d+1)}{2}$ from $S$, and $\frac{d(d-1)}{2}$ from $A$. Adding these yields $d^2$ real degrees of freedom. For $n$ qubits, where $d = 2^n$, the dimension of the Hermitian vector space is therefore $d^2 = 4^n$.

    While positivity and trace normalization constrain the valid subset of Hermitian matrices that represent physical density matrices, they do not reduce the dimensionality of the linear space in which these matrices reside. That is, although the set of density matrices is a strict subset of the Hermitian space, it still spans a space of dimension $4^n$ in terms of linearly independent real parameters. In this linear sense, the dimension remains unaffected by trace normalization, as can be seen in a simple one-qubit example:
    \begin{align*}
    \rho = \begin{pmatrix} p & a \\ a^* & 1-p \end{pmatrix},
    \end{align*}
    where $p$ and $1-p$ are linearly independent (the only solution to $a\cdot p + b\cdot (1-p) = 0$ that holds for all $p\in[0,1]$ is the trivial one $a=b=0$).

    The density-matrix representation of a pure state can be fully specified by $2^n$ complex coefficients, instead of $4^n$, owing to the fact that pure states correspond to rank-$1$ matrices.
    However, we would like to re-emphasize that the quantity of primary relevance to a linear regression problem is the number of \emph{linearly independent} real parameters.
    This means when considering the feature map $\rho(x)$, which is a mapping of input data to quantum states, the number of linearly independent components remains $4^n$, provided that the image of the feature map spans the full space of density matrices. That is true whenever the real components of the density matrix are linearly independent, i.e., $\sum_{i,j} a_{ij} \rho_{ij}(x) = 0$ has no solution \emph{independent of $x$} except for the trivial one $a_{ij}=0$ for all $i$ and $j$.
    Accordingly, the number of linearly independent real parameters in quantum kernel methods scales as $4^n$, reflecting the full dimensionality of the Hermitian space on $n$ qubits, which remains the relevant quantity for assessing expressivity in linear regression tasks.

\section{Detailed derivation of test error decomposition}
\label{a:regression_detail}

    In this section, we provide a more detailed derivation of the results obtained in Sec.~\ref{s:preliminaries}.
    
    Let the training dataset be represented as $ \calD = \{(\rho_i, y_i)\}_{i=1}^N$, where $\rho_i$ are density matrices.
    Using this, the data matrix $D$ and the label matrix $Y$ are defined as:
    \begin{align}
        D = \begin{pmatrix} \left[\rho_1\right]^\dagger \\ \vdots \\ \left[\rho_N\right]^\dagger \end{pmatrix}, \quad\quad Y = \begin{pmatrix} y_1 \\ \vdots \\ y_N\end{pmatrix}.
    \end{align}
    Here, $D$ is a collection of self-adjoint matrices, with each $\rho_i$ satisfying $\rho^\dagger=\rho$.
    We retain the Hermitian conjugate notation to maintain clarity, distinguishing between inner and outer products in the formalism.
    Specifically, in analogy to the notation for vectors and dual vectors (co-vectors), given a matrix $\rho$ and its \enquote{co-}matrix $[\sigma]^\dagger$, we adopt the following conventions:
    \begin{itemize}
        \item The inner product is the application of a co-matrix on a matrix: $[\sigma]^\dagger \rho = \langle \sigma,\rho\rangle_\text{HS}=\Tr\{\sigma\rho\}$, where we introduce the Hilbert-Schmidt inner product as the canonical inner product between Hermitian matrices.
        \item The outer product is the application of a matrix on a co-matrix: $\rho [\sigma]^\dagger = \rho\otimes\sigma$, where we introduce the tensor product as the canonical outer product between Hermitian matrices.
    \end{itemize}
    This way, the co-matrices are elements of the dual vector space of Hermitian matrices, so they are linear maps from Hermitian matrices to the reals.
    Together with the canonical basis of the dual space, both the inner and the outer product are well defined and follow the standard ones for usual vector spaces.
    
    The data matrix $D$ should therefore be interpreted as a vector of matrices, $D\in(\Herm(2^n))^N$, rather than a standard matrix.
    This distinction is essential for correctly defining inner and outer products within our framework.
    Meanwhile, the label vector $Y$ is a real-valued column satisfying $Y\in\bbR^{N\times 1}$.
    
    Since quantum models are linear in the Hilbert space of Hermitian operators, the goal is to identify optimal functions within the family
    \begin{align}
        \calG &= \{\Tr\{\rho\calM\}\,|\,\calM\in\Herm(2^n)\}, 
    \end{align}
    where $\calM$ is an observable or ``parameter matrix'' with $p = 4^n$ free parameters (dimension of the corresponding orthonormal Hermitian basis).
    We examine two regimes based on the relationship between the number of parameters $p$ and the number of training data points $N$: the \textit{underparameterized} regime, where $N > p$, and the \textit{overparameterized} regime, where $N < p$.
    In the underparameterized case, the number of data points exceeds the number of parameters while in the overparameterized case, the number of parameters exceeds the number of data points.
    In both scenarios, the task involves solving a quadratic optimization problem to determine the linear function that best fits the data.
    
    Let $D\calM$ denote the application of the data matrix $D$ to a parameter matrix $\calM$, defined analogously to scalar-vector multiplication but extended to matrices:
    \begin{align}
        D\calM &= \begin{pmatrix} \left[\rho_1\right]^\dagger \\ \vdots \\ \left[\rho_N\right]^\dagger \end{pmatrix}\begin{pmatrix} \calM \end{pmatrix}
        = \begin{pmatrix} \left[\rho_1\right]^\dagger\calM \\ \vdots \\ \left[\rho_N\right]^\dagger\calM \end{pmatrix} 
        = \begin{pmatrix} \Tr{\rho_1\calM} \\ \vdots \\ \Tr{\rho_N\calM} \end{pmatrix} .
    \end{align}
    In the underparameterized regime, the model parameters are determined using the standard least squares optimization problem:
    \begin{align}
        \calM^u &= \argmin_\calM \lVert D\calM - Y\rVert^2.
    \end{align}
    The solution to this problem is unique and takes the form~\cite{engl1996regularization}
    \begin{align}
            \calM^u &= \left(D^\dagger D\right)^{-1}D^\dagger Y,
    \label{eq:LSsolution_a}
    \end{align}
    where $D^\dagger D$ is the $p \times p$ sample covariance matrix (we abuse notation here: to be more precise, $D^\dagger D$ is the Hessian matrix, and $D^\dagger D/N$ is the sample covariance matrix).
    Expanding this term, we find:
    \begin{align}
        D^\dagger D &= \begin{pmatrix} \rho_1 & \rho_2 & \cdots & \rho_N \end{pmatrix}\begin{pmatrix} \left[\rho_1\right]^\dagger \\ \left[\rho_2\right]^\dagger \\ \vdots \\ \left[\rho_N\right]^\dagger \end{pmatrix} \\
        &= \rho_1\left[\rho_1\right]^\dagger + \rho_2\left[\rho_2\right]^\dagger + \cdots + \rho_N\left[\rho_N\right]^\dagger  \\
        &= \rho_1\otimes\rho_1 + \rho_2\otimes\rho_2 + \cdots + \rho_N\otimes\rho_N \\
        &= \sum_{k=1}^N \rho_k\otimes\rho_k  .
    \end{align}
    Let the corresponding optimal predictor function here be represented by $g^u \in \calG$, i.e., $g^u(\rho)= \Tr{\rho \calM^u }$.
    In contrast, for the overparameterized regime, infinitely many solutions satisfy the least squares condition.
    We therefore choose the standard minimum-norm least squares optimization problem~\cite{engl1996regularization}:
    \begin{align}
        \calM^o &= \argmin_\calM \lVert\calM\rVert^2_2 \\
        &\hphantom{=} \text{s.t.}\Tr{\rho_k\calM}=y_k, \forall k\in [N]
    \end{align}
    The optimal solution for this problem is well-known from optimization theory:
    \begin{align}
        \calM^o &= D^\dagger\left(DD^\dagger\right)^{-1} Y,
    \label{eq:MNLSsolution-a}
    \end{align}
    where $DD^\dagger$ is the $N \times N$ Gram matrix, explicitly written as:
    \begin{align}
            DD^\dagger &= \begin{pmatrix} \left[\rho_1\right]^\dagger \\ \left[\rho_2\right]^\dagger \\ \vdots \\ \left[\rho_N\right]^\dagger \end{pmatrix}\begin{pmatrix} \rho_1 & \rho_2 & \cdots & \rho_N \end{pmatrix} \\
            &= \begin{pmatrix} \left[\rho_1\right]^\dagger\rho_1 & \left[\rho_1\right]^\dagger\rho_2 & \cdots & \left[\rho_1\right]^\dagger\rho_N \\
            \left[\rho_2\right]^\dagger \rho_1 & \left[\rho_2\right]^\dagger \rho_2 & & \\
            \vdots & & \ddots & \\
            \left[\rho_N\right]^\dagger\rho_1 & & & \left[\rho_N\right]^\dagger\rho_N \end{pmatrix}  \\
            &=\begin{pmatrix} \Tr\{\rho_1\rho_1\} & \Tr\{\rho_1\rho_2\} & \cdots & \Tr\{\rho_1\rho_N\} \\
            \Tr\{\rho_2\rho_1\} & \Tr\{\rho_2\rho_2\} & & \\
            \vdots & & \ddots & \\
            \Tr\{\rho_N\rho_1\} & & & \Tr\{\rho_N\rho_N\} \end{pmatrix}  \\
            &= \left( \Tr{\rho_k\rho_l}\right)_{k,l=1}^N .
    \label{eq:grammatrix}
    \end{align}
    Here, we let the corresponding optimal predictor function be represented by $g^o \in \calG$, i.e., $g^o(\rho)= \Tr{\rho \calM^o }$.
    Having obtained the empirical risk minimizer for both regimes, it is important to note that these solutions are not guaranteed to coincide with the ground truth for all instances.
    To formalize this, let us introduce an oracle that provides the expected risk minimizer $\calM^\ast$, which represents the optimal linear model parameters for the entire data distribution:
    \begin{align}
        \calM^\ast &= \argmin_{\calM}\left\{\bbE_{(\rho,y)} \big[ \left(\Tr{\rho\calM} - y\right)^2 \big] \right\},
    \end{align}
    where the expectation value is taken over the underlying data distribution.
    Notably, even with the expected risk minimizer, the ground truth need not be a linear function in the feature space, implying that $\calM^\ast$ may still incur some error for individual points.
    To explore this further, consider a new test data point $(\rho_t, y_t)$ outside the training set.
    Note that this setting where the new test point lies outside the training dataset is called a \emph{random design} setting; both train and test data are separately sampled from the underlying distribution and the learned predictor does not encounter the test data during training.
    Note that this is an essential component to achieve double descent \cite{curth2024classical}.
    It will not occur in a \emph{fixed design} setting, where the same training inputs are used during testing but with resampled labels.
    
    In general, neither the empirical risk minimizer $\calM^{u,o}$ nor the expected risk minimizer $\calM^\ast$ will perfectly predict the label $y_t$:
    \begin{align}
        \Tr{\rho_t\calM^{u,o}} - y_t &\neq 0 \\
        \Tr{\rho_t\calM^\ast} - y_t &\eqcolon e_t.
    \end{align}
    Here, $e_t$ represents the ``residual'' error made by $\calM^\ast$ on the new input $\rho_t$ due to label noise or mismatch between the underlying data-generating function and the family of functions $\calG$.
    Specifically, for the training set $\calD$, we define $E$ as the vector of  errors:
    \begin{align}
    \label{eq:residual_bias_def}
        E &\coloneqq Y - D\calM^\ast \\
        e_k &\coloneqq y_k - \Tr\{\rho_k\calM^\ast\}.
    \end{align}
    Note that the expected value of this residual error (over the underlying distribution) is usually called the irreducible error in literature \cite{dar2021farewellbiasvariancetradeoff}.
    Following Ref.~\cite{schaeffer2023doubledescentdemystifiedidentifying}, we now investigate the difference in predictive performance between the empirical risk minimizer and the expected risk minimizer on unseen data.
    For clarity, we introduce the following notation for predictions:
        \begin{align}
            y^{u,o}_t &\coloneqq g^{u,o}(\rho_t)=\Tr{\rho_t\calM^{u,o}} \\
            y^\ast_t &\coloneqq g^\ast(\rho_t)=\Tr{\rho_t\calM^\ast}.
        \end{align}
    Conventionally, the test error in the predictor is defined as the mean squared error in the assigned labels compared to the underlying distribution.
    Here, we define the generalization error made by the empirical risk minimizer as the mean squared label error compared to the expected risk minimizer \cite{dar2021farewellbiasvariancetradeoff}:
        \begin{align}
        \label{eq:MSE_y_t_def}
        \text{MSE}[g^{u,o}]&= \bbE_{(\rho,y)} \mathbb{E}_{\mathcal{D}} \big[ \big( y^{u,o} - y^\ast \big)^2 \big] \\
        &= \text{Var}[g^{u,o}] + \text{Bias}^2[g^{u,o}],
        \end{align}
    where $\mathbb{E}_{\mathcal{D}}$ represents averaging over all possible training datasets.
    Here, bias captures how well the learned predictor from the class of functions performs compared to the optimal linear model.
    It is defined as:
        \begin{align}
        \text{Bias}[g^{u,o}]&= \bbE_{(\rho,y)} \bigg[ \bigg( \mathbb{E}_{\mathcal{D}} \big[ y^{u,o} \big] - y^\ast  \bigg)^2 \bigg].
        \end{align}
    Similarly, variance of the learned predictor function is defined as:
        \begin{align}
        \text{Var}[g^{u,o}]&= \bbE_{(\rho,y)} \mathbb{E}_{\mathcal{D}} \bigg[ \bigg( y^{u,o} - \mathbb{E}_{\mathcal{D}} \big[ y^{u,o} \big] \bigg)^2 \bigg], 
        \end{align}
    which captures the fluctuations in prediction across the choice of training dataset.
    This method of comparing the performance of empirical risk minimizer with respect to the expected risk minimizer helps to identify the various sources contributing to the test error.
    For a single test data point $(\rho_t, y_t)$ and a given sampled training dataset $\calD$, we can now analyze the difference $y^{u,o}_t-y^\ast_t$ in the test error by substituting the expressions for $\calM^{u,o}$ derived earlier in Eqs.~\eqref{eq:LSsolution_a} and~\eqref{eq:MNLSsolution-a}.
    For the underparameterized case, we have:
        \begin{align}
        \label{eq:a_yut-ystar}
            y^u_t - y^\ast_t &= \Tr{\rho_t(D^\dagger D)^{-1}D^\dagger Y} - \Tr{\rho_t\calM^\ast} \\
            &=\Tr{\rho_t(D^\dagger D)^{-1}D^\dagger (D\calM^\ast + E)} - \Tr{\rho_t\calM^\ast} \\     
            &=\Tr{\rho_t(D^\dagger D)^{-1}D^\dagger D\calM^\ast} + \Tr{\rho_t (D^\dagger D)^{-1}D^\dagger E} - \Tr{\rho_t\calM^\ast} \\
            &= \Tr{\rho_t\calM^\ast} + \Tr{\rho_t (D^\dagger D)^{-1}D^\dagger E} - \Tr{\rho_t\calM^\ast} \\
            &= \Tr{\rho_t (D^\dagger D)^{-1}D^\dagger E} .
            \end{align}
    Whereas for the overparameterized case, we have:          
        \begin{align}
        \label{eq:a_yot-ystar}
            y^o_t - y^\ast_t &= \Tr{\rho_t D^\dagger(DD^\dagger)^{-1}Y} - \Tr{\rho_t\calM^\ast} \\
            &= \Tr{\rho_t D^\dagger(DD^\dagger)^{-1}(D\calM^\ast + E)} - \Tr{\rho_t\calM^\ast} \\
            &= \Tr{\rho_t D^\dagger(DD^\dagger)^{-1}D\calM^\ast} + \Tr{\rho_t D^\dagger(DD^\dagger)^{-1}E} - \Tr{\rho_t\calM^\ast} \\
            &= \Tr{\rho_t\left(D^\dagger(DD^\dagger)^{-1}D - \bbI_{2^n\times2^n}\right)\calM^\ast} + \Tr{\rho_t D^\dagger(DD^\dagger)^{-1}E}.
        \end{align}
    Although the derived expressions may initially appear complex, they exhibit a clear and systematic structure that highlights the distinct mathematical roles played by the sample covariance matrix $D^\dagger D$, and the Gram matrix $DD^\dagger$, in the respective regimes.
    The data matrix $D \in \mathbb{C}^{N\times(2^n\times2^n)}$ is rectangular, meaning it does not have a direct inverse.
    To address this, square matrices are constructed by multiplying $D$ with its Hermitian conjugate either from the left, yielding the sample covariance matrix $D^\dagger D \in \mathbb{C}^{(2^n\times2^n)\times(2^n\times2^n)}$, or from the right, producing the Gram matrix $DD^\dagger \in \mathbb{R}^{N\times N}$.
    In the underparameterized regime, where $N > 4^n$, the rank of $D$ is at most $4^n$, making the sample covariance matrix $D^\dagger D$ the appropriate choice for constructing the inverse required for solving the least squares problem.
    This matrix contains the relationships between the parameters of the model and ensures the solution is well-defined.
    In contrast, in the overparameterized regime, where $N<4^n$, the data points are fewer than the model parameters, and the Gram matrix $DD^\dagger$ becomes the relevant object.
    This matrix captures the relationships among the training data points themselves, reflecting the nature of the optimization problem in this regime.
    This distinction between $D^\dagger D$ and $DD^\dagger$ underscores the fundamental interplay between model complexity and dataset size.
    Hence, if $D^+$ represents the pseudoinverse of the data matrix $D$, then the error expressions for the underparameterized case can be simplified as:
        \begin{align}
        \label{eq:yut-ystar-D+}
            y^u_t - y^\ast_t  &= \Tr{\rho_t (D^\dagger D)^{-1}D^\dagger E} \\
            &= \Tr{\rho_t D^+ E},
        \end{align}
        where $D^+ = (D^\dagger D)^{-1}D^\dagger$ is the left inverse of $D$ such that $ D^+D = \bbI_{2^n\times2^n}$.
        Similarly, for the overparameterized case, we get:
        \begin{align}
        \label{eq:yot-ystar-D+}
            y^o_t - y^\ast_t &= \Tr{\rho_t D^\dagger(DD^\dagger)^{-1}E} \\
            &\hphantom{=}+ \Tr{\rho_t\left(D^\dagger(DD^\dagger)^{-1}D - \bbI_{2^n\times2^n}\right)\calM^\ast} \\
            &= \Tr{\rho_t D^+E} \\
            &\hphantom{=} +\Tr{\rho_t\left(D^+ D - \bbI_{2^n\times2^n}\right)\calM^\ast}, 
        \end{align}
        where $D^+ = D^\dagger(DD^\dagger)^{-1}$ is the right inverse of $D$ such that $ DD^+ = \bbI_{N}$.
        These expressions, while insightful, remain somewhat cumbersome in their current form.
        To simplify them, we employ the Singular Value Decomposition (SVD) of $D$:
        \begin{align}
        \label{eq:svd_D_definition}
            D &= U \Sigma V^\dagger, 
        \end{align}
        where $U \in\bbU(N)$ is a unitary matrix of left singular vectors, associated with the \enquote{vector index} of $D$, $\Sigma \in \bbR^{N\times(2^n\times2^n)}$ is a diagonal matrix of singular values, with $\Sigma^k_{ij} \propto \delta_{k,(2^N(i-1)+j)}$, and $V \in\bbU(2^n\times2^n)$ is a unitary matrix of right singular vectors, associated to the \enquote{matrix indices} of $D$.
    
        Given the SVD decomposition of the data matrix $D$ in Eq.~\eqref{eq:svd_D_definition}, its pseudoinverse $D^+$ can be defined as:
        \begin{align}
            D^+ = V \Sigma^+ U^\dagger , 
        \end{align}
        where $\Sigma^+$ is the pseudoinverse of $\Sigma$ containing the reciprocals of the non-zero singular values on its diagonal.
    
        Substituting the simplified forms into Eqs.~(\ref{eq:yut-ystar-D+}) and (\ref{eq:yot-ystar-D+}), we obtain:
        \begin{align}
            y^u_t-y^\ast_t &= \Tr{\rho_t V \Sigma^+ U^\dagger E} \\
            y^o_t - y^\ast_t &= \Tr{\rho_t V \Sigma^+ U^\dagger E} \\
            &\hphantom{=} + \Tr{\rho_t\left(D^+D - \bbI_{2^n\times2^n}\right)\calM^\ast}.
        \label{eq:a_overp_SVD}
        \end{align}
        These two expressions show the contribution of a variance-like term and a bias-like term to the test error defined in Eq~(\ref{eq:MSE_y_t_def}).
        The second term in prediction error in the overparameterized regime captures the bias of this predictor because the overparameterized system is under-determined; rank $R$ cannot exceed $N$ and the information in the remaining $p-N$ dimensions cannot be fully captured~\cite{schaeffer2023doubledescentdemystifiedidentifying}.
        Along the same lines, the zero bias in the underparameterized predictor here is valid in this context since we are comparing it to a linear \enquote{teacher} model of the expected risk minimizer (the best linear model in the hypothesis class)~\cite{Jason2022Memorizing}, which is not necessarily a good predictor.
        Note that this bias term is sometimes also referred to as \emph{in-class bias}, and it should not be confused with the bias commonly discussed in traditional statistical learning theory, which typically decreases with increasing model complexity.  The latter, often termed \emph{misspecification bias}, arises only when comparing the expected risk minimizer to the ground truth function~\cite{dar2021farewellbiasvariancetradeoff}.
        Now, to further illustrate why the first term is a variance-like term, we expand it as:
        \begin{align}
            \Tr{\rho_t V \Sigma^+ U^\dagger E} &= \sum_{r=1}^R \frac{1}{\sigma_r} \Tr{ \rho^V_r \rho_t}\langle u_r , E\rangle.
        \label{eq:a_dd_error}
        \end{align}
        where we recall $R=\min\{N,2^n\times2^n\}$, and the right singular vectors $\rho^V_r$ are Hermitian, positive semi-definite, and unit-trace matrices.
        The sum captures three key factors:
        \begin{enumerate}
            \item The reciprocals of the singular values $1/\sigma_r$.
            \item The interaction of $\rho_t$ with the basis of right singular vectors $\Tr{ \rho^V_r \rho_t}$.
            \item The projection of $E$ onto the left singular vectors $\langle u_r, E\rangle$.
        \end{enumerate}
    
        This decomposition highlights the distinct roles of the left and right singular vectors.
        The left singular vectors encode linear combinations over the training data indices, while the right singular vectors span the quantum feature space as an orthonormal basis.
        Intuitively, the principal directions corresponding to small singular values of a given training dataset $\calD$ represent \enquote{under-sampled} directions of the underlying data distribution for which the training dataset does not capture enough information.
        Consequently, if a test data point has a large projection onto one of these principal components ($ [ \sigma_r \approx 0 ] \land [ \Tr{ \rho^V_r \rho_t} > 0] $), the model is pushed to extrapolate and overfit in this direction, increasing the variance of the predictor \cite{Jason2022Memorizing}.
        Hence, these three factors together give rise to fluctuations in the prediction and constitute the variance of the learned predictor function (when averaged over all training datasets).

\section{Proof of Theorem~\ref{thm:testpeak}}
    \label{a:proofthmtestpeak}

    In this section, we provide a formal statement and proof of Theorem~\ref{thm:testpeak}, which was discussed informally in Section~\ref{s:spectralanalysis}.

    \begin{theorem}[Test error peak at interpolation]\label{thm:testpeak}
        Consider a Lipschitz continuous quantum feature map $\rho$ with $p$ linearly independent dimensions, and consider a linear regression problem in the corresponding feature space.
        Let $\{(\rho(x_i),y_i)\}_{i=1}^N$ be a training set, where $x_i \sim \mathcal{N}(0, \mathbb{I}_d)$ are $d$-dimensional i.i.d. Gaussian normal random samples.
        Let $N, p \rightarrow \infty$ where $\frac{p}{N} \rightarrow c \in (0, \infty)$.
        Then, the test error of the quantum linear model based on $\rho(x)$ peaks with high probability at $N=p$.
        
        Furthermore, for a quantum feature map on $n$ qubits, the training set size $N$ at which the peak in test error occurs fulfills $N\in\calO(\exp(n))$ with high probability.
    \end{theorem}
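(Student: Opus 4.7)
The plan is to start from the variance-like term identified in Section~\ref{s:preliminaries}, namely the sum
\begin{equation}
\Tr{\rho_t V \Sigma^+ U^\dagger E} = \sum_{r=1}^{R} \frac{1}{\sigma_r}\,\Tr{\rho^V_r \rho_t}\,\langle u_r, E\rangle,
\end{equation}
and argue that this expression diverges in probability as the interpolation threshold $N=p$ is approached from either side. The bias-like contribution in Eq.~\eqref{eq:overp_SVD} is nonsingular in $N$ and can be shown separately to stay bounded away from the peak, so the entire phenomenon is driven by the factors $1/\sigma_r$ in the display above. The strategy is then to show that the smallest singular value of the data matrix $D$ vanishes precisely at the interpolation threshold with high probability, while the other two factors remain generically of order one.

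The main technical step is a random-matrix argument. The rows of $D$ are vectorizations of $\rho(x_i)$ with $x_i\sim\calN(0,\bbI_d)$, and by Theorem~\ref{thm:lipcontofrhomultidim} the map $x\mapsto\rho(x)$ is Lipschitz continuous. Hence each row is a Lipschitz image of a Gaussian vector and inherits the Gaussian concentration of measure property. This places $D$ in the regime covered by the generalized Mar\v{c}enko-Pastur theorem of Ref.~\cite{louart2018concentration}, which asserts that the empirical spectral distribution of $\frac{1}{N} D^\dagger D$ converges to the Mar\v{c}enko-Pastur law with ratio $c=p/N$. The support of this law, $[(1-\sqrt{c})^2, (1+\sqrt{c})^2]$ for $c \leq 1$ (and the mirrored statement for $c\geq 1$), touches zero exactly when $c=1$, so the smallest non-trivial eigenvalue of $D^\dagger D$ converges in probability to zero, and equivalently $\sigma_{\min}(D)\to 0$, as $N\to p$.

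Combining the two ingredients, the prefactor $1/\sigma_{\min}$ blows up at the interpolation threshold. To complete the probabilistic statement, one observes that the remaining factors $\Tr{\rho^V_r \rho_t}$ and $\langle u_r, E\rangle$ are, with high probability, bounded away from zero: $\rho_t$ is drawn independently of the training data so its projection on the right singular basis is generic, and $E$ is a generic residual vector in $\bbR^N$. Therefore $\bbE[(y_t^{u,o}-y_t^\ast)^2]$ attains its maximum at $N=p$ with high probability. The second claim of the theorem is then immediate from dimension counting: an $n$-qubit quantum feature map has image in $\Herm(2^n)$, whose real linear dimension is $4^n$, so $p\leq 4^n$ and the peak location satisfies $N\in\calO(\exp(n))$.

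The principal obstacle will be importing the concentration-based Mar\v{c}enko-Pastur statement from Ref.~\cite{louart2018concentration} with the normalizations matched to our setting, in particular verifying that the Lipschitz constant $\sqrt{dL}\lambda$ from Theorem~\ref{thm:lipcontofrhomultidim} and the effective population covariance of $\rho(x)$ lead to $N$- and $p$-independent concentration constants in the regime $p/N\to c$. A secondary but necessary subtlety is to pass from convergence of the bulk spectrum to control of the extreme (smallest) singular value near the edge $c=1$, which is what actually drives the blow-up; this requires an edge-statistics refinement of the MP law, and its validity in the generalized concentrated setting is where the most care is needed. The remaining argument, ruling out accidental cancellation between $1/\sigma_r$ and the geometric factors $\Tr{\rho^V_r \rho_t}\langle u_r,E\rangle$, is routine once framed in a genericity / positive-probability form.
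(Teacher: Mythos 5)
Your proposal follows essentially the same route as the paper's proof: the error decomposition into the $\sum_r \sigma_r^{-1}\Tr\{\rho^V_r\rho_t\}\langle u_r,E\rangle$ form, the generalized Mar\v{c}enko--Pastur law of Ref.~\cite{louart2018concentration} applied to the vectorized (Lipschitz-image-of-Gaussian) data matrix, the observation that the lower edge $\lambda_-=(1-\sqrt{c})^2$ vanishes at $c=1$, and the $p\le 4^n$ dimension count for the second claim. The caveat you flag about needing edge-statistics control of $\sigma_{\min}$ rather than only weak convergence of the bulk spectrum is a genuine subtlety that the paper's own proof also passes over without further comment.
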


    The following lemma provides the necessary groundwork for the proof of the theorem.

    \begin{lemma}[Generalized Mar\v{c}enko-Pastur law~\cite{louart2018concentration}]\label{lemma:genMP}
        Let $\phi$ be an $L$-Lipschitz continuous map.
        Let $X=[\phi(x_1), \phi(x_2),...,\phi(x_N)]$ be an $(N\times p)$-dimensional matrix with i.i.d. sampled inputs $x_i \sim \mathcal{N}(0, \mathbb{I}_d)$ for $i=1,...,N$.
        For $N, p \rightarrow \infty$ and $\frac{p}{N} \rightarrow c \in (0,  \infty)$, the
        empirical spectral density $\mu_p = \frac{1}{p}\sum_{i}^p\delta_{\lambda_i}$ of the sample covariance matrix $X^\dagger X$ with eigenvalues $\lambda_i$ 
        converges to $\mu_c$ weakly.
        On $(0, \infty)$, $\mu_c$ has continuous density $f_c$ given by
        \begin{align}
             f_c = \begin{cases} \frac{1}{2\pi x c } \sqrt{(x - \lambda_{-})(\lambda_{+} - x)} & \text{if } x \in [\lambda_{-}, \lambda_{+}]\\ 0 & \text{otherwise}\end{cases},
        \end{align}
        where $\lambda_{\pm} = (1\pm\sqrt{c})^2$.
    \label{lemma:gmp}
    \end{lemma}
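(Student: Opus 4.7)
The strategy is to take the error decomposition from Section~\ref{s:preliminaries}, isolate the variance-like term in Eq.~\eqref{eq:dd_error}, and show that its divergence is controlled by the reciprocal singular values $1/\sigma_r$ of the data matrix $D$. Lemma~\ref{lemma:gmp} then pinpoints exactly where these singular values concentrate, and forces the smallest of them to approach zero precisely at $N=p$, producing the peak.

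First, I would recast the ``vector of co-matrices'' object $D$ as an honest $N\times p$ real matrix by vectorizing each $\rho(x_i)$ in a fixed orthonormal Hermitian basis of $\Herm(2^n)$. Since vectorization in an orthonormal basis is an isometry, the composed map $\phi : x \mapsto \mathrm{vec}(\rho(x))$ inherits Lipschitz continuity from $\rho$. With the $x_i$ drawn i.i.d. from $\mathcal{N}(0,\bbI_d)$, this places $D$ squarely in the hypotheses of Lemma~\ref{lemma:gmp}, with sample covariance $D^\dagger D = \sum_{i=1}^N \phi(x_i)\phi(x_i)^\trans$.

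Second, I would invoke Lemma~\ref{lemma:gmp} in the regime $N,p\to\infty$ with $p/N\to c$. The empirical spectral density of $D^\dagger D$ converges weakly to the Mar\v{c}enko-Pastur density supported on $[\lambda_-,\lambda_+]$ with $\lambda_\pm=(1\pm\sqrt{c})^2$. At the interpolation ratio $c=1$, the lower edge satisfies $\lambda_-=0$, so with high probability a non-negligible fraction of eigenvalues accumulates arbitrarily close to zero. Translating back via $\sigma_r^2=\lambda_r$, the factors $1/\sigma_r$ diverge in the limit, whereas for any fixed $c\neq 1$ the bulk of the spectrum remains bounded away from zero and $1/\sigma_r$ is uniformly controlled. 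Plugging this into Eq.~\eqref{eq:dd_error} isolates the variance-like contribution as the culprit; the bias-like term in Eq.~\eqref{eq:overp_SVD} is bounded and cannot cancel the divergence, as is also corroborated empirically by Fig.~\ref{fig:ablation_results}(c). The second claim follows immediately: any $\rho(x)\in\Herm(2^n)$ lives in a real vector space of dimension $4^n$, so $p\le 4^n$, giving $N=p\in\calO(\exp(n))$.

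The main obstacle will be making the ``peak'' quantitative and ensuring the $1/\sigma_r$ blowup is not silently cancelled by the two companion factors $\Tr\{\rho^V_r\rho^{\vphantom{V}}_t\}$ and $\langle u_r,E\rangle$ in Eq.~\eqref{eq:dd_error}. Concretely, one needs a genericity/isotropy statement showing that the test-point projections onto the right singular modes associated with the smallest $\sigma_r$, and the residual-vector projections onto the corresponding left singular modes, are typically $\Theta(1)$ rather than anomalously small. This should follow from the rotational invariance of the Gaussian input distribution together with Lipschitz concentration of $\phi$, but it is not a direct consequence of the Mar\v{c}enko-Pastur law itself and is where most of the technical work lies; the remainder of the argument is then a clean composition of the error decomposition, Lemma~\ref{lemma:gmp}, and the dimension bound $p\le 4^n$.
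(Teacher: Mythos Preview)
Your proposal does not address the stated Lemma at all. The statement you were asked to prove is the Generalized Mar\v{c}enko-Pastur law itself --- a result from random matrix theory, cited in the paper from Ref.~\cite{louart2018concentration} and stated there without proof. Instead, you have written a proof sketch for \emph{Theorem~\ref{thm:testpeak}} (test error peak at interpolation), which \emph{uses} Lemma~\ref{lemma:gmp} as a black box. Your own opening sentence gives this away: ``Lemma~\ref{lemma:gmp} then pinpoints exactly where these singular values concentrate'' --- you are invoking the lemma, not establishing it. Likewise, the ``second claim'' you discuss (the $N\in\calO(\exp(n))$ bound) is part of Theorem~\ref{thm:testpeak}, not of the Lemma.

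Proving the Lemma itself is an entirely different undertaking: it requires the machinery of random matrix theory --- Stieltjes transform methods, concentration of Lipschitz functionals of Gaussian vectors, and a deterministic-equivalent argument for the resolvent of $X^\dagger X$ --- none of which appears in your proposal. The paper does not attempt this either; it simply imports the result. If the intended target was in fact Theorem~\ref{thm:testpeak}, then your approach matches the paper's proof essentially line for line (vectorize $\rho$, apply Lemma~\ref{lemma:gmp}, observe $\lambda_-\to 0$ at $c=1$, bound $p\le 4^n$), with the added merit that you explicitly flag the genericity issue for the companion factors $\Tr\{\rho^V_r\rho^{\vphantom{V}}_t\}$ and $\langle u_r,E\rangle$, which the paper's formal proof does not address.
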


    \begin{proof}[Proof of Theorem~\ref{thm:testpeak}]
        Let the $L$-Lipschitz continuous feature map on $n$ qubits be given as $\rho = \sum_j \nu_j |\Psi_j\rangle \langle \Psi_j|\in \mathbb{C}^{2^n \times 2^n}$, where $\Psi_j$ is some pure state, $\nu_j$ is the probability of occupying this state and we omit the $x$-dependence for simplicity.
        Recall from Section~\ref{s:preliminaries} that the data matrix $D$ is represented as a vector of density matrices.
        We now introduce an equivalent representation, the vectorized notation of a density matrix, as it aligns more closely with the formulation of the generalized Mar\v{c}enko-Pastur law in Lemma~\ref{lemma:gmp}:
        \begin{align}\label{eq:vectorizednotation}
             | \rho  \rangle\!\rangle &= \sum_j \nu_j \, |\Psi_j\rangle \otimes |\Psi_j^*\rangle \in \mathbb{C}^{4^n}.
        \end{align}
        
        The data matrix can then be written as:
        \begin{align}
            D =  \begin{pmatrix}  \langle\!\langle \rho_1 |\\  \langle\!\langle \rho_2 | \\ \vdots \\   \langle\!\langle\rho_N | \end{pmatrix} \in \mathbb{C}^{N \times 4^n}.
        \end{align}

        Assuming that $N$ and $p$ grow at a constant rate, we can apply Lemma~\ref{lemma:gmp} to the vectorized data matrix $D$.
        Hence, the empirical spectral density of the sample covariance matrix $D^\dagger D$ converges to $\mu_c$ weakly.
        For $\frac{p}{N} \rightarrow 1$, the smallest non-zero eigenvalue of the sample covariance matrix approaches zero, i.e., $\lambda_{-} \rightarrow 0$.
        Note that the eigenvalues of the sample covariance matrix $\lambda_i$ correspond to the squared singular values $\sigma_i$ of the data matrix.
        The error decomposition in Eq.~\eqref{eq:dd_error} revealed the dependence of the test error of quantum linear models on the reciprocal singular values of the data matrix $D$ and hence, the test error is maximized w.h.p. at $N=p$ if $N, p \rightarrow \infty$.

        Additionally, the number of linearly independent dimensions of any $n$-qubit quantum feature map cannot exceed the linear dimension of the vectorized data matrix, which is $4^n$.
    \end{proof}

\section{Proof of Theorem~\ref{thm:lipcontofrhomultidim}}
    \label{a:proofthmlipcont}
    Here, we restate and prove Theorem~\ref{thm:lipcontofrhomultidim}.
    \begin{customtheorem}{2}
        [Lipschitz continuity]
        Let $\rho(x)=S(x)\rho_0S^\dagger(x)$, and let $S(x) = \prod_{l=1}^L\prod_{k=1}^d S_{lk}(x_k)$ (where the product is ordered with increasing $k$ before increasing $l$).
        Let further $S_{lk}(\alpha)=\exp\left(-\frac{i}{2} \alpha H_{lk}\right)$.
        Let $\lambda = \max_{k,l}\lVert H_{lk}\rVert_{\calH}$.
        Then, for any $x,x'\in\bbR^d$, the map $x\mapsto\rho(x)$ is Lipschitz continuous with respect to the operator norm with a Lipschitz constant upper bounded by $\sqrt{dL}\lambda$:
        \begin{align}
            \lVert\rho(x)-\rho(x')\rVert_{\calH} &\leq \sqrt{dL}\lambda \lVert x-x'\rVert.
        \end{align}
    \end{customtheorem}
    
    \begin{proof}
        We start by writing out the difference 
        \begin{align}
            \rho(x) - \rho(x') &= S(x)\rho_0 S^\dagger(x) - S(x' )\rho_0 S^\dagger(x')\\
            & = S(x)\rho_0 S^\dagger(x) - S(x)\rho_0 S^\dagger(x') + S(x)\rho_0 S^\dagger(x') - S(x')\rho_0 S^\dagger(x'),
        \end{align}
        and then use it explicitly to bound the norm of the difference:
        \begin{align}
            \norm{\rho(x) - \rho(x')}_\mathcal{H} &\leq \norm{S(x)\rho_0 S^\dagger(x) - S(x)\rho_0 S^\dagger(x')}_\mathcal{H} + \norm{S(x)\rho_0 S^\dagger(x') - S(x' )\rho_0 S^\dagger(x')}_\mathcal{H}\\
            &= \norm{S(x)\rho_0\, \big( S^\dagger(x) - S^\dagger(x') \big)}_\mathcal{H} + \norm{\big( S(x) - S(x') \big)\, \rho_0 S^\dagger(x')}_\mathcal{H}\\
            &\leq \norm{S(x)}_\mathcal{H}\,\norm{\rho_0}_\mathcal{H}\,\norm{S^\dagger(x) - S^\dagger(x')}_\mathcal{H} + \norm{S(x) - S(x')}_\mathcal{H}\,\norm{\rho_0}_\mathcal{H}\,\norm{S^\dagger(x')}_\mathcal{H}\\
            &\leq \norm{S^\dagger(x) - S^\dagger(x')}_\mathcal{H} + \norm{S(x) - S(x')}_\mathcal{H}.
            \label{eq:Lipsch}
        \end{align}
        
        where in the first equation we add zero and then apply the triangle inequality.
        We further use the sub-multiplicativity of the operator norm, i.e., $ \norm{AB} \leq \norm{A}\norm{B}$ and the fact that unitary matrices fulfill $\norm{S} = 1$ and density matrices $\norm{\rho}\leq 1$.
        We then note that, because each $S_{lk}$ is generated by a single parameter, it holds that $\partial_\alpha S_{lk}(\alpha) = -\frac{i}{2} H_{lk} S_{lk}(\alpha)$.
        Next, we recall that the total derivative of $S(x)$ is a vector of all partial derivatives
        \begin{align}
            \frac{\mathrm d S(x)}{\mathrm dx} = \big( \partial_k S(x) \big)_{k=1}^d.
        \end{align}
        Indeed, the total derivative takes the form of a $3$-tensor, as a vector of matrices.
        Using the mean value theorem~\cite{flett1958mvt}:
        \begin{align}
            \norm{S(x_1) - S(x_2)}_\mathcal{H} \leq \norm{\frac{\text{d}S(x)}{\text{d}x}\Big|_{z\in [x_1, x_2]}}_\mathcal{H} \norm{x_1 - x_2}_2,
        \end{align}
        we need to find an upper bound to the operator norm.
        We invoke Theorem 3.1 in Ref.~\cite{li2016bounds}:
        \begin{align}
            \left\lVert\frac{\mathrm dS(x)}{\mathrm dx}\right\rVert_\calH &\leq \left\lVert \big( \left\lVert \partial_kS(x)\right\rVert_\calH\big)_{k=1}^d\right\rVert_2 \\
            &= \sqrt{\sum_{k=1}^d \left\lVert\partial_kS(x)\right\rVert_\calH^2}\\
            &= \sqrt{\sum_{k=1}^d \left\lVert\partial_k\prod_{l=1}^L\prod_{k'=1}^d S_{l k'}(x_{k'})\right\rVert_\calH^2}
            \label{eq:normsfortensors}
        \end{align}

        To keep the notation clear when applying the chain rule to the partial derivative of the product, we proceed by noting:
        \begin{align}
            S(x) &= S_{11}(x_1) \cdots S_{1d}(x_d) S_{21}(x_1) \cdots S_{Ld}(x_d) \\
            \partial_kS(x) &= S_{11}(x_1) \cdots \partial_kS_{1k}(x_k) \cdots S_{Ld}(x_d) + \cdots + S_{11}(x_1) \cdots \partial_kS_{Lk}(x_k)\cdots S_{Ld}(x_d) \\
            &= -\frac{i}{2}\left(S_{11}(x_1) \cdots H_{1k}S_{1k}(x_k) \cdots S_{Ld}(x_d)+ \cdots + S_{11}(x_1) \cdots H_{Lk}S_{Lk}(x_k)\cdots S_{Ld}(x_d)\right)
        \end{align}
        
        Inserting this into Eq.~\eqref{eq:normsfortensors} and using the fact that matrix norms are invariant under multiplication with unitary matrices yields
        \begin{align}
            \left\lVert\frac{\mathrm dS(x)}{\mathrm dx}\right\rVert_\calH &\leq \sqrt{\sum_{k=1}^d \left\lVert\partial_k\prod_{l=1}^L\prod_{k'=1}^d S_{l k'}(x_{k'})\right\rVert_\calH^2} \\
            &= \sqrt{\sum_{k=1}^d \left\lVert -\frac{i}{2}\left(S_{11}(x_1) \cdots H_{1k}S_{1k}(x_k) \cdots S_{Ld}(x_d)+ \cdots + S_{11}(x_1) \cdots H_{Lk}S_{Lk}(x_k)\cdots S_{Ld}(x_d)\right)\right\rVert_\calH^2} \\
            &\leq \frac{1}{2}\sqrt{\sum_{k=1}^d \left\lVert S_{11}(x_1) \cdots H_{1k}S_{1k}(x_k) \cdots S_{Ld}(x_d)\right\rVert_\calH^2 + \cdots +\left\lVert S_{11}(x_1) \cdots H_{Lk}S_{Lk}(x_k)\cdots S_{Ld}(x_d)\right\rVert_\calH^2} \\
            &= \frac{1}{2}\sqrt{\sum_{k=1}^d \left\lVert H_{1k}\right\rVert_\calH^2 + \cdots +\left\lVert H_{Lk}\right\rVert_\calH^2} \\
            &\leq \frac{1}{2} \sqrt{\sum_{k=1}^d L\lambda^2} \\
            &= \frac{\sqrt{dL}\lambda}{2}.
        \end{align}
    
        Plugging this into Eq.~\eqref{eq:Lipsch} above, we obtain:
        \begin{align}
            \lVert \rho(x)-\rho(x')\rVert_\calH &\leq 2 \left\lVert\frac{\mathrm dS(x)}{\mathrm dx}\right\rVert_\calH\lVert x-x'\rVert \\ 
            &\leq 2 \frac{\sqrt{dL}\lambda}{2} \lVert x-x'\rVert \\ 
            &= \sqrt{dL}\lambda \lVert x-x'\rVert,
        \end{align}
        which completes the proof.  
    \end{proof}

    We highlight that a similar result was recently reported in Ref.~\cite{recio2024single} for the $1$-norm.

\section{Additional experiments and details}\label{a:numerics}

Here, we provide further implementation details of the numerical experiments, as well as additional numerical results. In all experiments, we consider QML models based on embedding or projected quantum kernels, evaluated on synthetic and real-world datasets. The number of test samples is fixed to $100$, while the number of training samples is varied to explore the transition across underparameterized and overparameterized regimes.

The structure of the quantum circuits used to define the feature maps consists of layers of single-qubit rotations, followed by CNOT gates. This architecture is repeated to ensure the kernel matrix is full rank. The kernel rank is upper bounded by $\min(N, p)$, where $N$ is the number of training samples and $p$ is the feature space dimension. In the underparameterized regime ($N > p$), expressivity limits the rank to at most $p$, while in the overparameterized regime ($N < p$), sufficient expressivity ensures the kernel reaches full rank $N$.
Classical inputs are preprocessed to match the dimensionality constraints of the quantum model. In particular, dimensionality reduction is applied to ensure that inputs can be embedded into circuits with limited number of qubits. For this purpose, principal component analysis is used to project the data to a space of dimension $2n$, where $n$ is the number of qubits used in the corresponding feature map.

\paragraph{Ablation experiments:}

We extend the ablation experiments described in the main text to real-world datasets, see Fig.~\ref{fig:ablation_realdata}. These include the Fashion MNIST and California Housing datasets. In one set of experiments, we control the spectral properties of the training data by applying a cutoff to the singular values of the kernel matrix. Removing low-magnitude components mitigates the peak in test error, confirming that small singular values play a key role in amplifying generalization error near interpolation. In a second set of experiments, we assess the impact of feature alignment between training and test data by projecting test samples onto subsets of the leading singular directions derived from the training set. As the number of retained directions decreases, the test data becomes less aligned with the training data, leading to a diminished peak. Lastly, we also perform the residual error ablation experiments. The results on real-world data qualitatively mirror those from the synthetic case, showing that eliminating the residual error significantly reduces or removes the double descent peak. This further supports the theoretical decomposition and confirms that residual error plays a central role in the emergence of the test error peak.

\paragraph{Training error:}
In addition to the test error curves shown in Fig.~\ref{fig:dd_qkernel}, we present the corresponding training error curves in Fig.~\ref{fig:mse_train}. As expected, the training error remains close to zero in the overparameterized regime ($N/4^n < 1$), reflecting the model's capacity to fit the training data almost perfectly. In contrast, in the underparameterized regime ($N/4^n > 1$), the training error is small but non-zero, indicating that the model cannot fully capture the training data due to limited capacity. This behavior aligns well with theoretical expectations and clearly illustrates that the double descent phenomenon is not rooted in the training error behavior.

\paragraph{Sampling noise:} 

To analyze how sampling noise affects the double descent curve, we performed a numerical study with varying numbers of shots on $n=2$ qubits and the Fashion MNIST dataset, see Fig.~\ref{fig:noise}. For the simulation of $n_\text{shots}$ shots we add Gaussian noise with variance $\sigma^2  = K_{ij}(K_{ij}-1)/n_\text{shots}$ to each entry $K_{ij}$ of the Gram matrix as well as on kernel evaluations on test points. The perturbed Gram matrix is then projected back to a positive-semi definite matrix by setting all negative eigenvalues to zero and reconstructing the matrix from its eigen-decomposition. Effectively, a limited number of shots constrains the precision with which the Gram matrix can be estimated. To ensure consistency, we also limit the numerical precision when computing its spectrum to approximately $\log_{10}(n_\text{shots}/2)$ decimals. Since the test error depends on the inverse of the non-zero eigenvalues of the Gram matrix, shot noise acts similarly to a regularization, effectively smoothing the curve and reducing the prominence of the peak.

\paragraph{Regularization:}

A common practice in kernel methods is to add a regularization term of the form $\lambda \lVert \calM \rVert^2$ to the optimization problem, where the regularization parameter $\lambda \geq 0$ controls the trade-off between accurately fitting the training data and promoting a smoother, more stable solution. Fig.~\ref{fig:regularization} illustrates how increasing $\lambda$ affects the double descent behavior in quantum kernel methods for $n=3$ qubits on the Synthetic dataset. As regularization strength grows, the peak in the test error curve becomes less pronounced and eventually disappears. 

\begin{figure*}[t!]
        \centering
        \includegraphics[width=\textwidth]{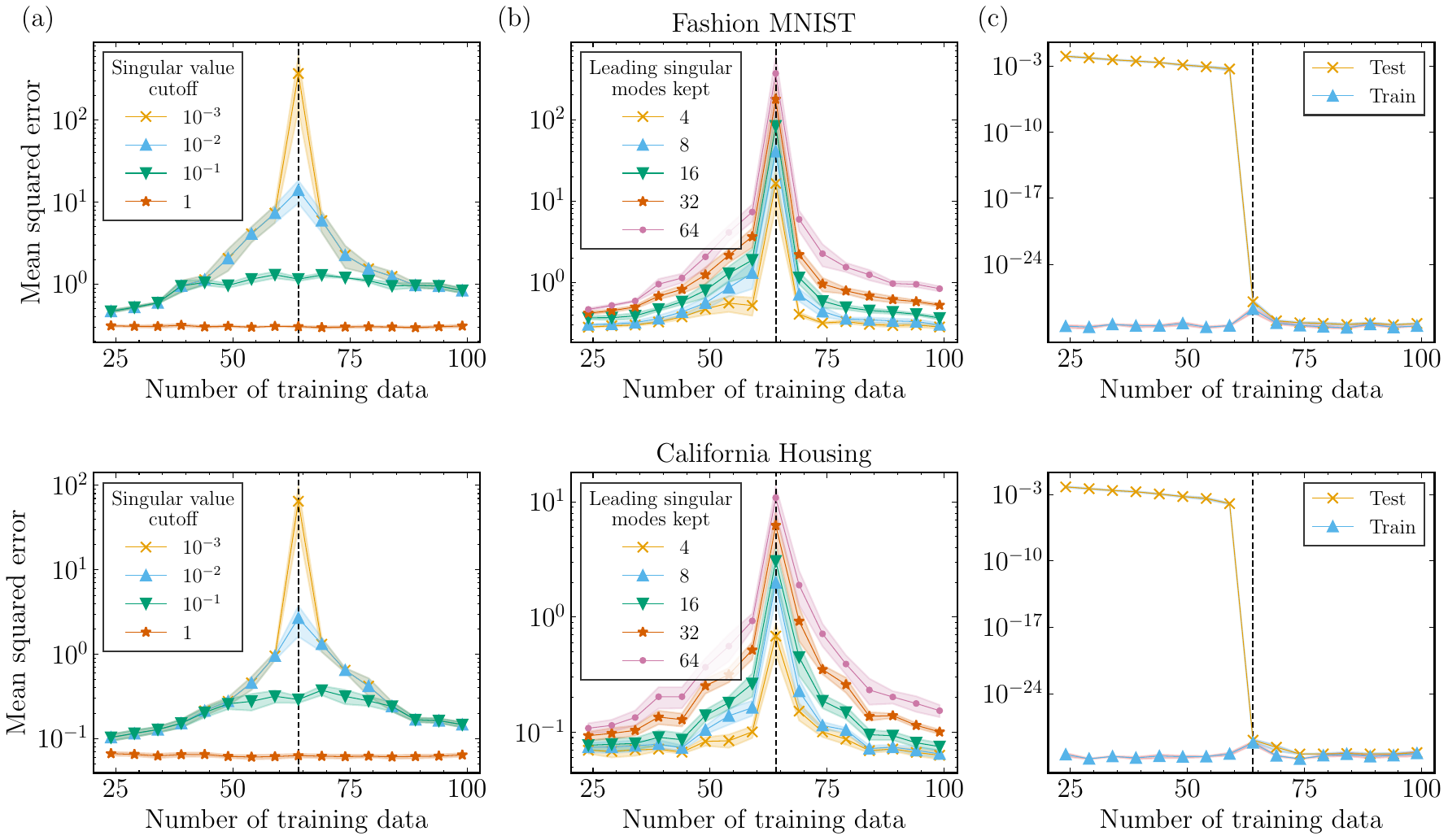}
        \caption{\textbf{Ablation experiments for real-world datasets.} Mean test squared error as a function of the number of training data points $N$ for (a) applying different cutoffs on the minimum singular value, (b) retaining varying numbers of leading singular modes of the input data, and (c) eliminating residual error.
        All experiments were performed on the Fashion MNIST (top) and California Housing (bottom) datasets with $n=3$ qubits.
        The shaded area corresponds to the standard deviation for five independent experiment repetitions, each using independently sampled training data. 
        The dotted black line indicates the interpolation threshold. The region to the left of the threshold corresponds to the overparameterized regime, while the region to the right corresponds to the underparameterized regime.
        Modifying each factor reduces or eliminates the double descent peak, highlighting their roles in the test error behavior.}
        \label{fig:ablation_realdata}
    \end{figure*}

    \begin{figure*}[t!]
        \centering
        \includegraphics[width=\textwidth]{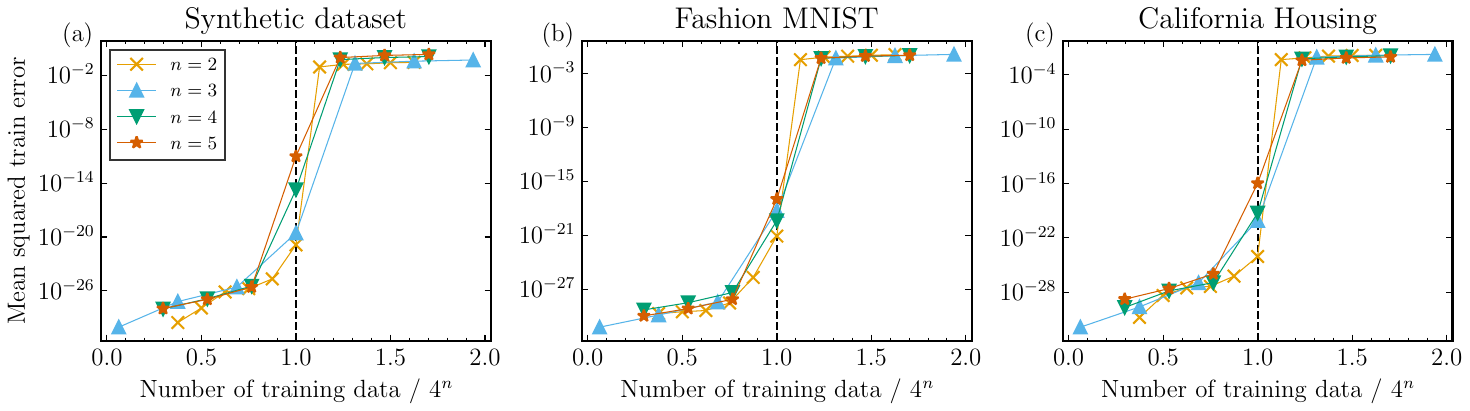}
        \caption{\textbf{Mean squared train error.}
            Mean squared train error as a function of the normalized number of training data points $(N/4^n)$ for the (a) Synthetic, (b) Fashion MNIST, and (c) California Housing datasets, employing the EQK in Eq.~\eqref{eq:eqk}.
            The dotted black line indicates the interpolation threshold. The region to the left of the threshold corresponds to the overparameterized regime, while the region to the right corresponds to the underparameterized regime.}
        \label{fig:mse_train}
    \end{figure*}

    \begin{figure*}[t!]
        \centering
        \includegraphics[width=0.5\textwidth]{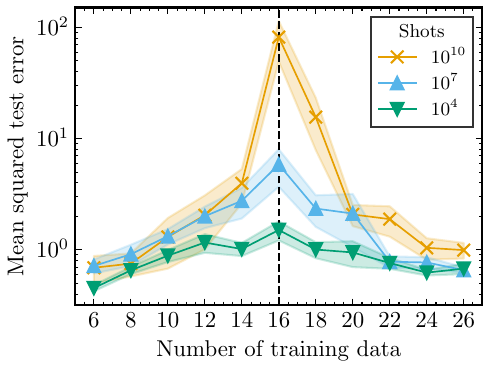}
        \caption{\textbf{Sampling noise.}
             Mean squared test error as a function of the number of training data points $N$ for different numbers of shots. The experiments were performed on the Fashion MNIST dataset with $n=2$ qubits. The dotted black line indicates the interpolation threshold. The region to the left of the threshold corresponds to the overparameterized regime, while the region to the right corresponds to the underparameterized regime.}
        \label{fig:noise}
    \end{figure*}

    \begin{figure*}[t!]
        \centering
        \includegraphics[width=0.5\textwidth]{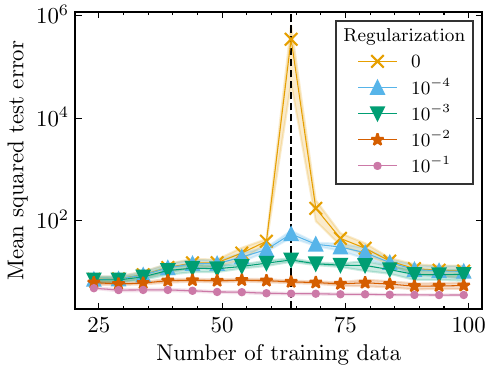}
        \caption{\textbf{Regularization.} Mean squared test error as a function of the number of training data points $N$ for for several values of the regularization parameter $\lambda$. The experiments were performed on the Synthetic dataset with $n=3$ qubits. The dotted black line indicates the interpolation threshold. The region to the left of the threshold corresponds to the overparameterized regime, while the region to the right corresponds to the underparameterized regime.}
        \label{fig:regularization}
    \end{figure*}

\end{document}